\newtheorem{lemma}{Lemma}
\newtheorem{theorem}{Theorem}
\def\<{\leqslant}           
\def\>{\geqslant}           
\def\d{\partial}
\def\wh{\widehat}
\def\Re{{\rm Re}}   
\def\Im{{\rm Im}}   
\def\vec{{\rm vec}}   
\def\vech{{\rm vech}}   
\def\cH{{\cal H}}   
\def\mR{{\mathbb R}}    
\def\Tr{{\rm Tr}}       
\def\rT{{\rm T}}        
\def\bGamma{{\mathbf \Gamma}}
\def\bPi{{\mathbf \Pi}}
\def\bP{{\bf P}}    
\def\bE{{\bf E}}    
\def\[[[{[\![\![}
\def\]]]{]\!]\!]}
\def\bra{{\langle}}
\def\ket{{\rangle}}
\def\re{{\rm e}}        
\def\rd{{\rm d}}        
\def\fM{{\mathfrak M}}
\def\cL{{\mathcal L}}
\def\bC{{\mathbf C}}
\def\bR{{\mathbf R}}
\def\bJ{{\bf J}}
\def\br{{\bf r}}
\def\x{\times}
\def\ox{\otimes}
\def\cZ{{\mathcal Z}}
\def\bD{{\mathbf D}}
\def\bI{{\mathbf I}}
\def\cW{{\mathcal W}}
\def\cX{{\mathcal X}}
\def\cJ{{\mathcal J}}
\def\cM{{\cal M}}
\def\cC{{\cal C}}
\def\cR{{\cal R}}
\def\cI{{\mathcal I}}
\def\cP{{\mathcal P}}
\def\cQ{{\mathcal Q}}
\def\cA{{\cal A}}
\def\cB{{\cal B}}
\def\cS{{\mathcal S}}
\def\mS{{\mathbb S}}
\def\Ups{\Upsilon}
\journal{Systems and Control Letters}
\begin{document}

\begin{frontmatter}


\title{{\bf A quasi-separation principle and Newton-like scheme for coherent quantum LQG control}\footnotemark}
\author[address]{Igor G. Vladimirov}
\author[address]{\quad Ian R. Petersen}
\address[address]{School of Engineering and Information Technology,
        The University of New South Wales at the Australian Defence Force Academy,
        Canberra, ACT 2600, Australia,
        E-mail:
            {\tt igor.g.vladimirov@gmail.com} (I.G.Vladimirov),
            {\tt i.r.petersen@gmail.com} (I.R.Petersen).
}
\begin{abstract}
This paper is concerned with constructing an optimal controller in the coherent quantum Linear Quadratic Gaussian problem. A coherent quantum controller is itself a quantum system and is required to be physically realizable. The use of coherent control avoids the need for classical measurements, which inherently entail the loss of quantum information.  Physical realizability corresponds to the  equivalence of the controller to an open quantum harmonic oscillator and relates its state-space  matrices to the Hamiltonian, coupling and scattering operators of the oscillator. The Hamiltonian parameterization of the controller is combined with Frechet differentiation of the LQG cost with respect to the state-space matrices to obtain equations for the optimal controller. A quasi-separation principle for the gain matrices of the quantum controller is established, and a Newton-like iterative scheme for numerical solution of the equations is outlined.
\end{abstract}


\begin{keyword}
quantum control
\sep
LQG cost
\sep
physical realizability
\sep
Frechet differentiation
\MSC 
81Q93          
\sep
49N10          
\sep
93E20          
\sep
93B52          


\end{keyword}

\end{frontmatter}
\footnotetext{A shortened version of this work is to appear in the 18th IFAC World Congress Proceedings \cite{VP_2011}.}

\section{Introduction}

Sensitivity to observation is an inherent feature of quantum mechanical systems whose state is  affected by  interaction with a macroscopic measuring device. This motivates the use of coherent quantum controllers to replace the classical observation-actuation control loop by a measurement-free feedback, which is organized as an interconnection of the quantum plant with another quantum system.
If such a controller is implemented using quantum-optical components (for example, optical cavities and beam splitters) mediated by light fields \cite{GZ_2004}, then it is dynamically equivalent to an open quantum harmonic oscillator, which constitutes a building block of quantum systems described by linear quantum stochastic differential equations (QSDEs) \cite{P_1992,P_2010}.

This leads to the notion of  physical realizability which imposes quadra\-tic constraints on  the state-space matrices of the controller \cite{JNP_2008,NJP_2009,SP_2009}, thus complicating the solution of quantum control problems which are otherwise reduced to appropriate unconstrained problems for an equivalent classical system. The links between classical control problems and their quantum analogues are known, for example, for Linear Quadratic Gaussian (LQG) and $\cH_{\infty}$-control.

The Coherent Quantum LQG (CQLQG) problem seeks a physically realizable quantum controller to minimize the average output ``energy'' of the closed-loop system per unit time.  This problem has been addressed in  \cite{NJP_2009}, where a numerical procedure was proposed for  finding \textit{suboptimal} controllers to ensure a given upper bound on the LQG cost. Instead, the present paper  focuses on necessary conditions for optimality and second order conditions for local strict optimality of a physically realizable controller and computation of the \textit{optimal} controller. Both approaches make use of the fact that the CQLQG problem is equivalent  to a constrained LQG problem for a classical plant, with the LQG cost computed as the squared $\cH_2$-norm of the system in terms of the controllability and observability Gramians satisfying algebraic Lyapunov equations.

We utilize a Hamiltonian parameterization that relates the state-space matrices of a physically realizable controller to the free Hamiltonian, coupling and scattering operators of an open quantum harmonic oscillator \cite{EB_2005}. To obtain equations for the optimal quantum controller, we employ an algebraic approach,
based on the Frechet differentiation of the LQG cost with respect to the   state-space  matrices from \cite{VP_2010} and similar to \cite{SIG_1998}.
The resulting  equations for the optimal controller involve the inverse of special self-adjoint operators on matrices  that requires the use of vectorization \cite{M_1988}. Their spectral properties play an important role in the present study.

Although the optimal CQLQG controller does not inherit the control/filtering separation principle of the classical LQG control problem, a partial decoupling of equations for the gain matrices still holds. This \textit{quasi-separation} property leads to a Newton-like scheme for numerical computation of the quantum controller that involves the second order Frechet derivative of the LQG cost which is related to the perturbation of solutions to algebraic Lyapunov equations.

The paper is organised as follows.  Section~\ref{sec:plant} specifies the quantum plants being  considered. Sections~\ref{sec:controller} and \ref{sec:PR} describe physically realizable   quantum controllers. Section~\ref{sec:problem} formulates the CQLQG control problem. Sections~\ref{sec:bGamma} and \ref{sec:class} introduce auxiliary classes of matrices and  self-adjoint operators. Section~\ref{sec:optimal} obtains equations for the optimal CQLQG controller. Section~\ref{sec:separation} discusses the quasi-separation property. Section~\ref{sec:d2EdR2}  establishes a second order condition of optimality. Section~\ref{sec:newton} outlines a Newton-like scheme for computing the optimal controller. Appendices 
provide a subsidiary material on invertibility of the special self-adjoint operators, perturbations of inverse Lyapunov operators and Frechet differentiation of the LQG cost.

\section{Quantum plant}\label{sec:plant}

We consider a quantum plant with an $n$-dimensional state vector
$x_t$, a $p$-dimensional output $y_t$ and inputs $w_t$, $\eta_t$
of dimensions $m_1$, $m_2$. The state and the
output are governed by the QSDEs:
\begin{align}
\label{x}
    \rd x_t
    & =
    A x_t\rd t  +  B_1 \rd w_t + B_2 \rd \eta_t,\\
\label{y}
    \rd y_t
    & =
    z_t\rd t  +  D \rd w_t,\\
\label{z}
    z_t
    & =
     Cx_t.
\end{align}
Here,
$
    A\in \mR^{n\x n}
$,
$
    B_k\in \mR^{n\x m_k}
$,
$
    C\in \mR^{p\x n}
$,
$
    D\in \mR^{p\x m_1}
$
are constant matrices, and $z_t$ is a ``signal part'' of $y_t$. The state dimension $n$
and the input dimensions $m_1$, $m_2$ are even:
$
    n = 2\nu$,
    $
    m_k = 2\mu_k
$.
 The plant state vector $x_t$
is formed by self-adjoint operators
(similar to the position and
momentum operators) and, in the Heisenberg picture of
quantum mechanics, evolves in time $t$. The entries of the
$m_1$-dimensional vector $w_t$ are self-adjoint quantum Wiener
processes  \cite{P_1992} whose infinitesimal increments compose with each other  according to the Ito table
\begin{equation}
\label{Ftable}
        \rd w_t \rd w_t^{\rT}
    =
    F\rd t.
\end{equation}
Here, $F$ is a complex positive semi-definite Hermitian matrix which, on the right-hand side of (\ref{Ftable}), is a shorthand notation for $F\ox \cI$, with $\cI$ the identity operator on the underlying boson Fock space and $\ox$ the tensor product. We
assume that vectors are organized as columns unless indicated
otherwise, and the transpose $(\cdot)^{\rT}$ acts on vectors and
matrices with operator-valued entries as if the latter were scalars.
Also, $(\cdot)^{\dagger}:= ((\cdot)^{\#})^{\rT}$ denotes the transpose
of the entry-wise adjoint $(\cdot)^{\#}$. Associated with the Hermitian  matrix $F$ from (\ref{Ftable}) are real matrices
$
    S
    :=
    (
        F + \overline{F}
    )/2
     =
    \Re F
$  and  $
    T
     :=
    (
        F - \overline{F}
    )/i
     =
    2\Im F
$,
where $\overline{(\cdot)}$, $\Re(\cdot)$ and $\Im(\cdot)$ are the entry-wise complex conjugate, real and imaginary parts, and $i :=\sqrt{-1}$ is the imaginary unit.  The symmetric matrix $S$ contributes to the evolution of the covariance matrix of the plant state vector $x_t$, whilst $T$ is antisymmetric and affects the cross-commutations between the entries of $x_t$ through
$
        [\rd w_t, \rd w_t^{\rT}]
    :=
        \rd w_t \rd w_t^{\rT}
        -
        (\rd w_t \rd w_t^{\rT})^{\rT}
    =
    (F-F^{\rT})\rd t
    =
    i T\rd t
$.
Here, the commutator $[\alpha, \beta] := \alpha\beta-\beta\alpha$  applies entry-wise, and the relation $F^{\rT} = \overline{F}$ is ensured by $F = F^*$. In what follows, it is assumed that $S = I_{m_1}$, and $T$ is canonical in the sense that
\begin{equation}
\label{Tcanonical}
    T
    :=
    I_{\mu_1}
    \ox \bJ,
    \qquad
    \bJ
    :=
    \begin{bmatrix}
    0 & 1\\
    -1 & 0
    \end{bmatrix},
\end{equation}
where $I_r$ is the identity matrix of order $r$. That is, $T$ is a block diagonal matrix with
$\mu_1$ copies of $\bJ$ over the diagonal. By permuting the rows and
columns, the matrix $T$ from (\ref{Tcanonical}) can be brought to an
equivalent canonical form
\begin{equation}
\label{Talternative}
    T
    =
    \bJ\ox I_{\mu_1}
    =
    \begin{bmatrix}
        0_{\mu_1} & I_{\mu_1}\\
        -I_{\mu_1} & 0_{\mu_1}
    \end{bmatrix},
\end{equation}
where $0_r$ denotes the $(r\x r)$-matrix of zeros. The canonical
antisymmetric matrix $J$ of any order satisfies $J^2=-I$. Quantum Wiener processes will be assumed to have the canonical Ito matrix $F=I+iJ/2$.


\section{Coherent quantum controller}\label{sec:controller}

A measurement-free coherent quantum controller is another quantum system with a $n$-dimensional state vector $\xi_t$ with self-adjoint operator-valued entries whose interconnection with the plant (\ref{x})--(\ref{z}) is described by QSDEs
\begin{align}
\label{xi}
    \rd \xi_t
     & =
    a\xi_t\rd t + b_1 \rd \omega_t + b_2\rd y_t,\\
\label{eta}
    \rd \eta_t
     & =
    \zeta_t \rd t + \rd \omega_t,\\
\label{zeta}
    \zeta_t
    & =
    c\xi_t.
\end{align}
Here,
$
    a \in \mR^{n\x n}
$,
$
    b_1\in \mR^{n\x m_2}
$,
$
    b_2\in \mR^{n\x p}
$,
$
    c\in \mR^{m_2\x n}
$, and $\omega_t$ is a
$m_2$-dimensional vector of self-adjoint quantum Wiener processes
which commute with the plant noise $w_t$ in (\ref{x}) and (\ref{y}).
The combined set of equations
 (\ref{x})--(\ref{z}) and (\ref{xi})--(\ref{zeta})
describes the fully quantum closed-loop system in Fig.~\ref{fig:system},
\begin{figure}[htb]
\begin{center}
\unitlength=1.6mm
\begin{picture}(50.00,20.00)
    \put(10,10){\framebox(10,10)[cc]{{\small plant}}}
    \put(30,6){\framebox(10,10)[cc]{{\small controller}}}
    \put(10,15){\line(-1,0){10}}
    \put(0,15){\line(0,-1){15}}
    \put(0,0){\line(1,0){50}}
    \put(50,0){\line(0,1){7}}
    \put(50,7){\vector(-1,0){10}}
    \put(50,15){\vector(-1,0){10}}
    \put(30,11){\vector(-1,0){10}}
    \put(30,19){\vector(-1,0){10}}
    \put(25,20){\makebox(0,0)[cb]{$w$}}
    \put(25,10){\makebox(0,0)[ct]{$\eta$}}
    \put(45,16){\makebox(0,0)[cb]{$\omega$}}
    \put(45,6){\makebox(0,0)[ct]{$y$}}
\end{picture}\vskip-1mm
\caption{
    The quantum closed-loop system described by  (\ref{x})--(\ref{z}) and (\ref{xi})--(\ref{zeta}), where the plant and controller noises $w$ and $\omega$ are
    commuting quantum Wiener processes.
}
\label{fig:system}
\end{center}
\end{figure}
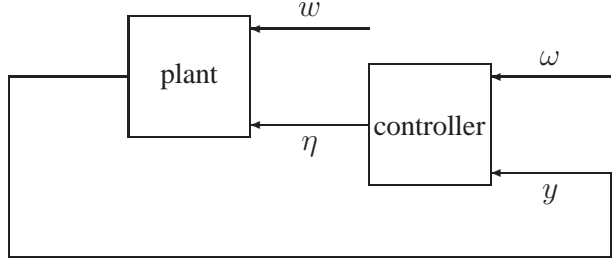
whose output
observables form a $p_0$-dimensional process
\begin{equation}
\label{cZ}
    \cZ_t
     =
    C_0 x_t  + D_0 \zeta_t,
\end{equation}
where $C_0\in \mR^{p_0\x n}$ and $D_0\in \mR^{p_0\x m_2}$ are given matrices. The $2n$-dimensional combined state vector $\cX_t:=
[x_t^{\rT}\ \xi_t^{\rT}]^{\rT}$ and the output $\cZ_t$ of the
closed-loop system  are governed by the QSDEs
\begin{equation}
\label{closed}
    \rd \cX_t
      =
      \cA     \cX_t\rd t +   \cB      \rd \cW_t,
    \qquad
    \cZ_t
     =
      \cC       \cX_t.
\end{equation}
Here, the combined quantum Wiener process
$\cW_t:= [w_t^{\rT}\ \omega_t^{\rT}]^{\rT}$ has a block diagonal Ito table. The matrices $  \cA    $, $  \cB $, $
 \cC      $ of the closed-loop system (\ref{closed}) are given by
\begin{equation}
\label{cABC}
    \left[
        \begin{array}{c|c}
              \cA     &   \cB     \\
            \hline
              \cC       & 0
        \end{array}
    \right]
    =
    \left[
        \begin{array}{cc|cc}
                  A & B_2 c   & B_1 & B_2\\
                  b_2C & a & b_2D & b_1\\
            \hline
                  C_0 & D_0 c &  0 & 0
        \end{array}
    \right]
    =
    \left[
        \begin{array}{cc|c}
                  A & B_2 c   & B\\
                  b\bC & a & b\bD\\
            \hline
                  C_0 & D_0 c &  0
        \end{array}
    \right],
\end{equation}
where
\begin{equation}
\label{bB_CD}
    b
    :=
        \begin{bmatrix}
            b_1 & b_2
        \end{bmatrix},\quad
    B
    :=
        \begin{bmatrix}
            B_1 & B_2
        \end{bmatrix},\quad
    \bC
    :=
        \begin{bmatrix}
            0 \\
            C
        \end{bmatrix},\quad
    \bD
    :=
        \begin{bmatrix}
            0 & I \\
            D & 0
        \end{bmatrix}.
\end{equation}
The dependence of $\cA$, $\cB$, $\cC$ on the controller matrices
$a$, $b$, $c$ is equivalently described by
\begin{equation}
\label{Gamma}
    \Gamma
     :=
        \begin{bmatrix}
              \cA     &   \cB     \\
              \cC       & 0
        \end{bmatrix}
    =
    \Gamma_0
    +
    \Gamma_1
    \gamma
    \Gamma_2,
\qquad
    \gamma
    :=
        \begin{bmatrix}
            a & b\\
            c & 0
        \end{bmatrix}.
\end{equation}
The affine map $\gamma \mapsto \Gamma$ is completely specified by
the plant (\ref{x})--(\ref{z})  through
the matrices
\begin{equation}
\label{Gamma0_Gamma12}
    \Gamma_0
    :=
        \begin{bmatrix}
            A & 0 &  B\\
            0 & 0_n & 0\\
            C_0 & 0 & 0
        \end{bmatrix},\qquad
    \Gamma_1
    :=
        \begin{bmatrix}
            0 & B_2\\
            I_n & 0\\
            0 & D_0
        \end{bmatrix},
    \qquad
    \Gamma_2
    :=
        \begin{bmatrix}
            0 & I_n & 0\\
            \bC & 0 & \bD
        \end{bmatrix}.
\end{equation}
Using the terminology introduced formally  in Section~\ref{sec:class}, the map $\gamma \mapsto \Gamma_1\gamma \Gamma_2$ in (\ref{Gamma}) is a grade one linear operator $\[[[\Gamma_1, \Gamma_2\]]]$.

\section{Physical realizability}\label{sec:PR}

A controller (\ref{xi})--(\ref{zeta})
is called \textit{physically realizable} (PR) \cite{JNP_2008,NJP_2009},  if its state-space matrices satisfy
\begin{equation}
\label{PR1_PR2}
    aJ_0 +J_0a^{\rT}
    +
    b
    J
    b^{\rT}
    =0,
    \qquad
    b_1 = J_0 c^{\rT} J_2.
\end{equation}
Here, $J$ is a block-diagonal matrix, partitioned in conformance
with the matrix $b$ from (\ref{bB_CD}) as
\begin{equation}
\label{J}
   J
   :=
   \bD
   \begin{bmatrix}
    J_1 & 0\\
    0 & J_2
   \end{bmatrix}
   \bD^{\rT}
   =
   \begin{bmatrix}
    J_2 & 0\\
    0 & DJ_1D^{\rT}
   \end{bmatrix},
\end{equation}
and $J_0$, $J_1$, $J_2$ are fixed real antisymmetric matrices of orders $n$, $m_1$, $m_2$, which specify the commutation relations for the controller state variables $\xi_t$ and the plant and controller noises $w$ and $\omega$. For convenience, $J_0$, $J_1$, $J_2$ are assumed to have the canonical form (\ref{Tcanonical}) or (\ref{Talternative}).
%
The relations (\ref{PR1_PR2}) describe the equivalence of the controller to an open quantum
harmonic oscillator and the possibility of its quantum optical implementation \cite{GZ_2004}.  The first of these equations is the condition for preservation of the canonical commutation relations for the state variables of the quantum harmonic oscillator. The second  PR condition, which relates the matrices $b_1$ and $c$ by a linear bijection, describes the unitary transformation of the quantum Wiener process at the input of the quantum harmonic oscillator.
The first of the PR conditions  (\ref{PR1_PR2}), which is a linear equation with respect to $a$,   determines $a$ as a quadratic function of $b$ up to the subspace of Hamiltonian matrices
$
    \{
        a \in \mR^{n\x n}:\
        a J_0 + J_0 a^{\rT} = 0
    \}=J_0 \mS_n=\mS_n J_0
$, with  $\mS_n$ the subspace of real symmetric matrices of order $n$:
\begin{equation}
\label{a}
    a
    =
    \underbrace{J_0 R}_{\rm Hamiltonian\ matrix}
    +
    \underbrace{
    b
    J
    b^{\rT}J_0/2.}_{\rm particular\ solution}
\end{equation}
Here, $R\in \mS_n$ specifies the
free Hamiltonian operator $\xi_t^{\rT} R \xi_t/2$ of the quantum harmonic oscillator
\cite[Eqs. (20)--(22) on pp. 8--9]{EB_2005}. Since the matrix $bJb^{\rT}$ is antisymmetric, $bJb^{\rT}J_0$ is skew-Hamiltonian. Therefore, (\ref{a}) describes an orthogonal decomposition of the matrix $a$ into projections onto the subspaces of Hamiltonian and skew-Hamiltonian matrices in the sense of the Frobenius inner product of real matrices
$
    \bra X, Y\ket
    :=
    \Tr(X^{\rT}Y)
$, with $\|X\|:= \sqrt{\bra X, X \ket}$ the Frobenius norm. From the second PR condition in (\ref{PR1_PR2}) and the canonical structure of $J_0$ and $J_2$, it follows that the matrix $c$ is related to $b_1$ by
\begin{equation}
\label{c}
    c
    =
    J_2 b_1^{\rT}J_0
    =
    J_2 \bI^{\rT}b^{\rT}J_0,
    \qquad
    \bI:=
    \begin{bmatrix}
        I\\
        0
    \end{bmatrix},
\end{equation}
where, in view of (\ref{bB_CD}),  the matrix $\bI$ ``extracts'' $b_1$ from $b$  as $    b_1 = b \bI $. In combination with the decomposition (\ref{a}), this implies that, for a physically realizable quantum controller,  the matrix $\gamma$ in (\ref{Gamma})  is completely parameterized by the matrices $R$ and $b$ as
\begin{equation}
\label{gamma}
    \gamma
    =
    \begin{bmatrix}
      J_0 R + bJb^{\rT}J_0/2        & b \\
      J_2 \bI^{\rT} b^{\rT}J_0      & 0 \\
    \end{bmatrix}.
\end{equation}
In view of the physical meaning of $R$, we will refer to (\ref{gamma}) as the \textit{Hamiltonian parameterization} of the coherent quantum controller, with the $\mS_n \x \mR^{n\x (m_2+p)}$-valued parameter
$
    \begin{bmatrix}
        R & b
    \end{bmatrix}
$; see Fig.~\ref{fig:PR}.
\begin{figure}[htb]
\begin{center}
\unitlength=1.5mm
\begin{picture}(30.00,27.00)

\multiput(0,0)(10,-10){2}{
\qbezier(6.5, 25)(7, 27)(9, 27.5)

\qbezier(21, 27.5)(23, 27)(23.5, 25)
\multiput(9.5, 27.5)(2, 0){6}{\line(1, 0){1}}
\multiput(6.5, 24.5)(0, -2){6}{\line(0, -1){1}}
\qbezier(6.5, 13)(7, 11)(9, 10.5)
\multiput(9.5, 10.5)(2, 0){1}{\line(1, 0){1}}
\qbezier(13.5, 13)(13, 11)(11, 10.5)
\qbezier(13.5, 18)(14, 20)(16, 20.5)
\multiput(13.5, 13.5)(0, 2){2}{\line(0, 1){1}}
\qbezier(23.5, 23)(23, 21)(21, 20.5)
\multiput(23.5, 23.5)(0, 2){1}{\line(0, 1){1}}
\multiput(20.5, 20.5)(-2, 0){2}{\line(-1, 0){1}}}



    \put(10,19){\makebox(0,0)[cc]{{\small $\gamma$}}}
    \put(20,9){\makebox(0,0)[cc]{{\small$\Gamma$}}}

    \put(0,24){\circle{4}}
    \put(0,24){\circle{3}}
    \put(0,24){\makebox(0,0)[cc]{{\small $R$}}}
    \put(2,24.2){\vector(1,0){6}}
    \put(2,23.8){\vector(1,0){6}}

    \put(10,24){\circle{4}}
    \put(10,24){\makebox(0,0)[cc]{{\small $a$}}}
    \put(18.35,22.65){\vector(-1,-1){7}}
    \put(18.65,22.35){\vector(-1,-1){7}}

    \put(20,22){\vector(0,-1){6}}

    \put(11.5,22.5){\vector(1,-1){7}}

    \put(21.5,22.5){\vector(1,-1){7}}
    \put(11.5,12.5){\vector(1,-1){7}}

    \put(21.5,12.5){\vector(1,-1){7}}

    \put(20,24){\circle{4}}
    \put(20,24){\circle{3}}
    \put(20,24){\makebox(0,0)[cc]{{\small$b$}}}

    \put(18,24.2){\vector(-1,0){6}}
    \put(18,23.8){\vector(-1,0){6}}

    \put(10,14){\circle{4}}
    \put(10,14){\makebox(0,0)[cc]{{\small$c$}}}
    \put(12,14){\vector(1,0){6}}

    \put(20,14){\circle{4}}
    \put(20,14){\makebox(0,0)[cc]{{\small$\cA$}}}

    \put(30,12){\vector(0,-1){6}}

    \put(22,4){\vector(1,0){6}}

    \put(20,4){\circle{4}}
    \put(20,4){\makebox(0,0)[cc]{{\small$\cC$}}}

    \put(30,14){\circle{4}}
    \put(30,14){\makebox(0,0)[cc]{{\small$\cB$}}}

    \put(30,4){\circle{4}}
    \put(30,4){\makebox(0,0)[cc]{{\small$ E $}}}
\end{picture}\vskip-2mm
\caption{
    This directed acyclic graph describes the dependence of the LQG cost $ E $ of the closed-loop system on the matrices
    $R$ and $b$. An oriented edge $\bigcirc\!\!\!\!\!\tiny{\alpha}\!\!\rightarrow\!\!\bigcirc\!\!\!\!\!\tiny{\beta}$ signifies ``$\beta$ depends on $\alpha$''. The dashed lines encircle the matrix triples  $\gamma$ and $\Gamma$ defined by (\ref{Gamma}).
    The emergence of $R$ and the dependencies indicated by double arrows represent the PR conditions for the quantum controller, with $a$, $b$, $c$ being otherwise independent.
}
\label{fig:PR}
\end{center}
\end{figure}
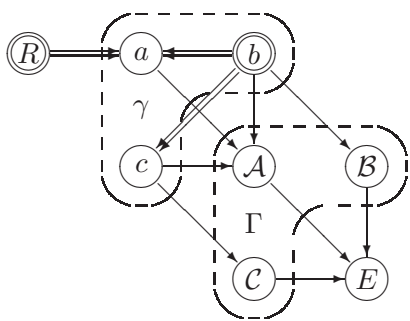
 The PR conditions (\ref{PR1_PR2}) are invariant under the group of similarity transformations of the controller matrices
$    (a,b,c)
    \mapsto
    (\sigma a \sigma^{-1}, \sigma b, c\sigma^{-1})
$,
where $\sigma$ is any real symplectic matrix of order $n$ (that is, $\sigma J_0\sigma^{\rT} = J_0$). This corresponds to the canonical state transformation $\xi_t \mapsto \sigma \xi_t$; see also \cite[Eqs. (12)--(14)]{S_2000}.  Any such transformation of a physically realizable controller leads to its  equivalent state-space representation, with the matrix $R$ transformed as $R\mapsto \sigma^{-\rT}R \sigma^{-1}$.

\section{Coherent quantum LQG control problem}\label{sec:problem}

The Coherent Quantum LQG (CQLQG) control problem  \cite{NJP_2009} consists in minimizing the average output ``energy'' of the closed-loop system (\ref{closed}):
\begin{align}
\nonumber
     E
     :=&
    \lim_{t\to +\infty}
    \left(
        \frac{1}{t}
        \int_{0}^{t}
        \bE
        (
            \cZ_s^{\rT} \cZ_s
        )
        \rd s
    \right)
        =
    \Tr(\cC       P   \cC      ^{\rT})\\
\label{E}
    =&
    \Tr(  \cB^{\rT}       Q   \cB)
    =
    -2
    \bra \cA, H\ket
    \longrightarrow
    \min.
\end{align}
The minimum is taken over the  $n$-dimensional controllers (\ref{xi})--(\ref{zeta}) which make the matrix $ \cA $ in (\ref{cABC}) Hurwitz and satisfy the PR conditions (\ref{PR1_PR2}). Here, $\bE X:= \Tr(\rho X)$ denotes  the quantum expectation over the underlying density operator $\rho$, and
$
    P
    :=
    \lim_{t\to +\infty}
    \Re\bE
    (
        \cX_t\cX_t^{\rT})
$
is the steady-state covariance matrix of the state vector of the closed-loop system. Also,
we use the shorthand notation
\begin{equation}
\label{H}
    H:= QP,
\end{equation}
with $ P $ and $ Q $
satisfying the algebraic Lyapunov equations
\begin{equation}
\label{PQ}
      \cA     P
    +
     P   \cA    ^{\rT}
    +
      \cB       \cB    ^{\rT}
    = 0,
\quad
      \cA    ^{\rT} Q
    +
     Q    \cA
    +
      \cC      ^{\rT}  \cC
    = 0,
\end{equation}
so that these matrices are the controllability and observability
Gramians of the state-space realization triple $( \cA    ,  \cB     ,  \cC      )$. The spectrum of the diagonalizable matrix $H$ in (\ref{H}) is
formed by the squared Hankel singular values of the system, and we will refer to $H$ as the \textit{Hankelian}.
The fact that $ E $ coincides with the squared $\cH_2$-norm of a classical strictly proper linear time invariant system enables the CQLQG problem (\ref{E}) to be recast as a constrained LQG control problem for an equivalent classical plant
\begin{equation}
\label{class_plant}
    \left[
        \begin{array}{l|lc}
                  A     &   B   & B_2\\
                \hline
                  C_0   &   0   & D_0\\
                  \bC   &  \bD  & 0
        \end{array}
    \right]
    =
    \left[
        \begin{array}{l|llc}
                  A     &   B_1 & B_2   & B_2\\
                \hline
                  C_0   &   0   & 0     & D_0\\
                  0     &   0   & I     & 0\\
                  C   &  D  & 0 & 0
        \end{array}
    \right]
\end{equation}
driven by a $(m_1+m_2)$-dimensional standard Wiener process, with the controller being noiseless.
%
We will employ the smooth dependence of the cost $ E $ on the matrices $R$ and $b$ which govern the Hamiltonian parameterization (\ref{gamma}) of a physically realizable stabilizing controller. The conditions of  optimality, obtained in Section~\ref{sec:optimal},   utilize the Frechet differentiation of the LQG cost with respect to the state-space realization matrices \cite{VP_2010} assembled into matrices with a specific sparsity pattern and an auxiliary class of self-adjoint operators introduced in Sections~\ref{sec:bGamma} and \ref{sec:class}.


\section{The $\Gamma$ sparsity structure}\label{sec:bGamma}

The subsequent considerations involve Frechet differentiation with respect to state-space realization matrices assembled into matrices of the ``$\Gamma$-shaped'' sparsity  structure  (\ref{Gamma}). We denote by
\begin{equation}
\label{bGamma}
    \bGamma_{r,m,p}
    \!:=\!
    \left\{
                 \begin{bmatrix}
                     \varphi  &  \sigma \\
                     \tau & 0
                 \end{bmatrix}:
         \varphi \in \mR^{r\x r},
         \sigma \in \mR^{r\x m},
         \tau \in \mR^{p\x r}
    \right\}\!\!
\end{equation}
the Hilbert space of real $(r+p)\x(r+m)$-matrices
whose bottom-right block of size $(p\x m)$ is zero.
The space $\bGamma_{r,m,p}$, which is a subspace of $\mR^{(r+p)\x(r+m)}$,
inherits the Frobenius inner product of matrices. Let
$\bPi_{r,m,p}$ denote the orthogonal projection onto
$\bGamma_{r,m,p}$ whose action on a $(r+p)\x (r+m)$-matrix consists
in padding its bottom-right $(p\x m)$-block $\psi$ with zeros:
\begin{equation}
\label{bPi}
    \bPi_{r,m,p}
    \left(
                 \begin{bmatrix}
                     \varphi  &  \sigma \\
                     \tau & \psi
                 \end{bmatrix}
    \right)
    =
        \begin{bmatrix}
            \varphi  &  \sigma \\
            \tau & 0
        \end{bmatrix}.
\end{equation}
The subscripts in $\bGamma_{r,m,p}$ and $\bPi_{r,m,p}$ will often be omitted
for brevity. The Frechet derivative $\d_X f$ of a smooth function
$
    \bGamma
     \ni
    \begin{bmatrix}
        \varphi & \sigma\\
        \tau & 0
    \end{bmatrix}
     =:
    X
     \mapsto
    f(X)
     \in
    \mR
$
belongs to the same Hilbert space (\ref{bGamma}) and inherits the sparsity structure:
$
    \d_X f
     =
    \begin{bmatrix}
        \d_{\varphi} f & \d_{\sigma} f\\
        \d_{\tau}f & 0
    \end{bmatrix}
$.

\section{Special self-adjoint operators}\label{sec:class}

For the purposes of Section~\ref{sec:optimal}, we associate a linear operator $\[[[\alpha, \beta\]]]: \mR^{p\x q}\to \mR^{s\x t}$ with a pair of matrices $\alpha \in \mR^{s \x p}$ and  $\beta\in \mR^{q\x t}$, by
\begin{equation}
\label{cL}
    \[[[\alpha, \beta\]]](X)
    :=
    \alpha X \beta.
\end{equation}
The map $(\alpha, \beta)\mapsto \[[[\alpha, \beta\]]]$ from the direct product of the matrix spaces to the space of linear operators on matrices is bilinear. If $s=p$ and $t=q$, then the spectrum of the operator $\[[[\alpha, \beta\]]]$ on $\mR^{p\x q}$ consists of the pairwise products $\lambda_j\mu_k$ of the eigenvalues $\lambda_1, \ldots, \lambda_p$ and $\mu_1, \ldots, \mu_q$ of the matrices $\alpha$ and $\beta$, so that their spectral radii  are related by
\begin{equation}
\label{rrr}
    \br(\[[[\alpha, \beta \]]])
    =
    \br(\alpha)
    \br(\beta).
\end{equation}
Furthermore, for any positive integer $r$ and matrices $\alpha_1, \ldots, \alpha_r$ $ \in \mR^{s\x p}$ and $\beta_1, \ldots, \beta_r\in \mR^{q\x t}$, we define a linear operator
\begin{equation}
\label{cLsum}
    \[[[
        \alpha_1,\beta_1
        \mid
        \ldots
        \mid
        \alpha_r, \beta_r
    \]]]
    :=
    \sum_{k=1}^{r}
    \[[[
        \alpha_k,  \beta_k
    \]]],
\end{equation}
where the matrix pairs are separated by ``$\mid$''s.
Of importance will be
self-adjoint linear operators on the Hilbert space
$\mR^{p\x q}$  of the form (\ref{cLsum}) where
$\alpha_1, \ldots, \alpha_r\in \mR^{p\x p}$ and $\beta_1, \ldots, \beta_r\in \mR^{q\x
q}$ are such that for any $k=1, \ldots, r$, the matrices $\alpha_k$ and $\beta_k$ are either both symmetric or both antisymmetric. Such an operator  (\ref{cLsum}) will be referred to as a \textit{self-adjoint operator of grade} $r$.
The self-adjointness is understood in the sense of the Frobenius inner product on $\mR^{p\x q}$ and follows
from the property that, in each of the cases
$(\alpha^{\rT},\beta^{\rT}) =(\pm\alpha, \pm\beta)$, the adjoint $\[[[ \alpha, \beta\]]]^{\dagger} = \[[[\alpha^{\rT}, \beta^{\rT}\]]]$ coincides with $\[[[\alpha, \beta\]]]$. In these cases,  as for any self-adjoint operator, the eigenvalues of $\[[[\alpha, \beta\]]]$ are all real.
\begin{lemma}
\label{lem:spec}
If $\alpha \in \mR^{p\x p}$ and $\beta\in \mR^{q\x q} $ are both antisymmetric, then the spectrum of $\[[[\alpha, \beta\]]]$ is symmetric about the origin. If $\alpha$ and $\beta $ are both symmetric and positive (semi-) definite, then $\[[[\alpha, \beta\]]]$ is positive (semi-) definite, respectively.
\end{lemma}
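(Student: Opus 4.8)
The plan is to reduce both statements to the eigenvalue--product description of $\[[[\alpha,\beta\]]]$ recorded above (the spectrum of $\[[[\alpha,\beta\]]]$ on $\mR^{p\x q}$ is the multiset of pairwise products $\lambda_j\mu_k$ of the eigenvalues $\lambda_1,\ldots,\lambda_p$ of $\alpha$ and $\mu_1,\ldots,\mu_q$ of $\beta$), combined with the self-adjointness already established for the two cases $(\alpha^{\rT},\beta^{\rT})=(\pm\alpha,\pm\beta)$. Self-adjointness guarantees that every product $\lambda_j\mu_k$ is real, so in both parts the spectrum is a genuine subset of the real line and the assertions about symmetry and positivity are meaningful.

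For the first assertion, I would use the fact that the spectrum of $\alpha$ coincides with that of $\alpha^{\rT}=-\alpha$ (transposition preserves eigenvalues), i.e.\ with the multiset of negated eigenvalues of $\alpha$; hence $\{\lambda_1,\ldots,\lambda_p\}$ is invariant under $\lambda\mapsto-\lambda$ (the zero eigenvalues being their own negatives). Consequently the involution $\lambda_j\mu_k\mapsto-\lambda_j\mu_k=(-\lambda_j)\mu_k$ is realized by a multiplicity-preserving bijection of the index pairs (replace $\lambda_j$ by $-\lambda_j$, which occurs with the same multiplicity), so the multiset $\{\lambda_j\mu_k\}$ of eigenvalues of $\[[[\alpha,\beta\]]]$ equals its own negation. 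This is exactly the symmetry of the spectrum about the origin. Note that only the antisymmetry of $\alpha$ is needed for the negation symmetry; the antisymmetry of $\beta$ is used to make $\[[[\alpha,\beta\]]]$ self-adjoint and hence its eigenvalues real.

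For the second assertion I would bypass the spectral route and compute the quadratic form directly. Writing $\alpha=\alpha^{1/2}\alpha^{1/2}$ and $\beta=\beta^{1/2}\beta^{1/2}$ with symmetric positive semi-definite square roots, the cyclic property of the trace gives
\[
    \bra X, \[[[\alpha,\beta\]]](X)\ket
    =
    \Tr(X^{\rT}\alpha X\beta)
    =
    \Tr(Y^{\rT}Y)
    =
    \|Y\|^2,
    \qquad
    Y:=\alpha^{1/2}X\beta^{1/2},
\]
where $Y^{\rT}=\beta^{1/2}X^{\rT}\alpha^{1/2}$ by symmetry of the square roots. Hence $\[[[\alpha,\beta\]]]$ is positive semi-definite. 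If, moreover, $\alpha$ and $\beta$ are positive definite, then $\alpha^{1/2}$ and $\beta^{1/2}$ are invertible, so $Y=0$ forces $X=0$ and the form is strictly positive for $X\ne0$; thus $\[[[\alpha,\beta\]]]$ is positive definite. (Positivity also follows from $\lambda_j\mu_k\>0$, but the factorization treats the semi-definite and definite cases uniformly.)

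The computations are routine; the only place that warrants care is the bookkeeping in the first part, namely arguing at the level of multisets with algebraic multiplicities that the product spectrum $\{\lambda_j\mu_k\}$ is negation-invariant, and invoking self-adjointness to ensure these products are real so that ``symmetric about the origin'' refers to real eigenvalues. I do not anticipate a substantive obstacle beyond this.
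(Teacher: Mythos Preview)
Your proposal is correct and follows essentially the same approach as the paper: both assertions are reduced to the eigenvalue-product description of $\[[[\alpha,\beta\]]]$, and for the positive (semi-)definite case the paper gives exactly your square-root factorization $\bra X,\alpha X\beta\ket=\|\sqrt{\alpha}\,X\sqrt{\beta}\|^2$ as an alternative to the spectral argument. The only cosmetic difference is that the paper phrases the antisymmetric case via ``eigenvalues are pure imaginary and symmetric about the origin'', whereas you derive negation-invariance of the spectrum from $\alpha^{\rT}=-\alpha$ and then invoke self-adjointness for realness; these are equivalent formulations of the same idea.
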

\begin{proof}
If $\alpha$ and $\beta$ are both antisymmetric, then their eigenvalues $\lambda_1, \ldots, \lambda_p$ and $\mu_1, \ldots, \mu_q$ are all pure imaginary and symmetric about the origin \cite{HJ_2007}. Hence, the eigenvalues $\lambda_j\mu_k$ of $\[[[\alpha, \beta\]]]$ also form a set  which is symmetric about the origin. By a similar reasoning, if $\alpha$ and $\beta$ are real positive (semi-) definite symmetric matrices, then their eigenvalues are all real and (nonnegative) positive, and hence, so are the eigenvalues of $\[[[\alpha, \beta\]]]$ which implies its positive (semi-) definiteness. Alternatively, the second assertion of the lemma also follows from the relation $\[[[\alpha,\beta\]]] = \[[[\sqrt{\alpha}, \sqrt{\beta}\]]]^2$ which holds for any positive semi-definite symmetric matrices $\alpha \in \mR^{p\x p}$ and $ \beta\in \mR^{q\x q}$, so that $\bra X, \alpha X \beta \ket = \|\sqrt{\alpha} X \sqrt{\beta}\|^2\> 0$ for any $X\in \mR^{p\x q}$.
\end{proof}
Whilst the operator (\ref{cL})  with nonsingular $\alpha$ and
$\beta$ is straightforwardly
invertible: $\[[[\alpha,  \beta\]]]^{-1} = \[[[\alpha^{-1},\beta^{-1}\]]]$, the
inverse of
$\cM :=     \[[[
        \alpha_1,\beta_1
        \mid
        \ldots
        \mid
        \alpha_r, \beta_r
    \]]]
$
from (\ref{cLsum}) for $r>1$ (except for the case
$\sum_{j,k} \[[[\alpha_j, \beta_k\]]] = \[[[\sum_j\alpha_j, \sum_k\beta_k\]]]$,
which reduces to a grade one operator, or special Lyapunov operators $\[[[\alpha, I\]]] + \[[[ I,\alpha\]]]$ with $\alpha=\alpha^{\rT}$ which are treated by diagonalizing  the matrix $\alpha$), can only be computed
using the vectorization of matrices \cite{M_1988} as
$    \cM^{-1}(Y)
    =
    \vec^{-1}
    (
        \Xi^{-1}
    \vec(Y)
    )
$,
provided that the matrix
$    \Xi
    :=
    \sum_{k=1}^{r}
    \beta_k^{\rT}\ox \alpha_k
$ is nonsingular.
Here, $\vec:\mR^{p\x q}\to \mR^{pq}$ is a linear bijection which
maps a matrix $X$ to the vector obtained by writing the columns
$X_{\bullet1}, \ldots, X_{\bullet q}$ of the matrix  one underneath
the other.  Invertibility conditions for grade two operators    is discussed in \ref{sec:inversion}.

\section{Equations for the optimal controller}\label{sec:optimal}

 Necessary conditions for optimality in the class of \linebreak $n$-dimensional physically realizable stabilizing controllers are obtained by equating the Frechet derivatives of the LQG cost $E$  with respect to $R$ and $b$ to zero. In view of Fig.~\ref{fig:PR}, the chain rule allows  the differentiation to be carried out in three steps.   First, the matrices $\cA$, $\cB$, $\cC$ of the closed-loop system are considered to be independent variables. Below is an adaptation of \cite[Lemma~7  of Appendix~B]{VP_2010}
 whose proof is given to make the exposition self-contained.

\begin{lemma}
\label{lem:dEdGamma}
Suppose the matrix $\cA$ in (\ref{cABC}) is Hurwitz. Then the Frechet derivative of the LQG cost $ E $ from (\ref{E})   with respect to the matrix $\Gamma$ from (\ref{Gamma}) is
\begin{equation}
\label{dEdGamma}
    \d_{\Gamma} E
    =
    2
    \begin{bmatrix}
        H & Q\cB\\
        \cC P & 0
    \end{bmatrix}.
\end{equation}
Here, $H$ is the Hankelian defined by (\ref{H}) in terms of  the Gramians $P$, $Q$ from
 (\ref{PQ}).
\end{lemma}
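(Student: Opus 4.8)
The plan is to differentiate the representation $E=\Tr(\cC P\cC^{\rT})$, in which $P$ is the controllability Gramian solving the first equation in (\ref{PQ}) and therefore depends \emph{implicitly} on $\cA$ and $\cB$. Under perturbations $\delta\cA$, $\delta\cB$, $\delta\cC$ of the closed-loop matrices (with the bottom-right block of the variation kept zero, so that $\delta\Gamma\in\bGamma$), the Gramian $P$ acquires an increment $\delta P$, and the first-order change of the cost is
\[
    \delta E = 2\Tr(\cC P\,\delta\cC^{\rT}) + \Tr(\cC\,\delta P\,\cC^{\rT}),
\]
where the symmetry of $P$ has been used to merge the two $\delta\cC$-contributions. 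The explicit $\delta\cC$-term causes no difficulty; the essential obstacle is the term $\Tr(\cC\,\delta P\,\cC^{\rT})$, which carries the implicit increment $\delta P$. Since $\cA$ is Hurwitz, both Gramians exist, are unique, and depend smoothly on $(\cA,\cB,\cC)$, so $\delta P$ is well defined as the unique solution of the Lyapunov equation obtained by differentiating the first identity in (\ref{PQ}).

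The key step is to eliminate $\delta P$ by means of the observability Gramian $Q$, instead of solving for $\delta P$ explicitly. I would substitute $\cC^{\rT}\cC = -(\cA^{\rT}Q + Q\cA)$ from the second equation in (\ref{PQ}) into $\Tr(\cC\,\delta P\,\cC^{\rT}) = \Tr(\delta P\,\cC^{\rT}\cC)$, which, using the cyclic property of the trace and the symmetry of $Q$, yields $-\Tr\big(Q(\cA\,\delta P + \delta P\,\cA^{\rT})\big)$. Differentiating the first equation in (\ref{PQ}) gives
\[
    \cA\,\delta P + \delta P\,\cA^{\rT} = -\big(\delta\cA\,P + P\,\delta\cA^{\rT} + \delta\cB\,\cB^{\rT} + \cB\,\delta\cB^{\rT}\big),
\]
so that the troublesome term becomes $\Tr\big(Q(\delta\cA\,P + P\,\delta\cA^{\rT} + \delta\cB\,\cB^{\rT} + \cB\,\delta\cB^{\rT})\big)$, now expressed entirely through $\delta\cA$ and $\delta\cB$.

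The rest is bookkeeping with the Frobenius inner product. Collecting all contributions and repeatedly using $\Tr(XY)=\Tr(YX)$, the symmetry of $P$ and $Q$, and the definition $H=QP$, I expect
\[
    \delta E = 2\bra H, \delta\cA\ket + 2\bra Q\cB, \delta\cB\ket + 2\bra \cC P, \delta\cC\ket .
\]
The point that genuinely requires care is the $\delta\cA$-block: the two terms $\Tr(Q\,\delta\cA\,P)$ and $\Tr(QP\,\delta\cA^{\rT})$ must each be reduced to $\bra H,\delta\cA\ket$, which relies on $PQ=(QP)^{\rT}=H^{\rT}$ and thus produces $H$ rather than $H^{\rT}$ as the coefficient; this is precisely where the symmetry of $P$ and $Q$ is used. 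Finally, since $\delta\Gamma$ ranges over $\bGamma$ and the Frechet derivative inherits the same sparsity pattern (Section~\ref{sec:bGamma}), reading off the blocks against $\delta\Gamma=\bigl[\begin{smallmatrix}\delta\cA & \delta\cB\\ \delta\cC & 0\end{smallmatrix}\bigr]$ gives exactly (\ref{dEdGamma}). An equivalent derivation starting from $E=\Tr(\cB^{\rT}Q\cB)$, with $P$ playing the dual role in eliminating $\delta Q$, yields the same expression and serves as a useful consistency check.
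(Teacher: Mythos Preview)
Your proposal is correct and uses essentially the same approach as the paper: the key step of replacing $\cC^{\rT}\cC$ by $-(\cA^{\rT}Q+Q\cA)$ and then invoking the differentiated Lyapunov equation to eliminate $\delta P$ is exactly what the paper does. The only difference is organizational: the paper computes $\d_{\cA}E$, $\d_{\cB}E$, $\d_{\cC}E$ one block at a time (using the $\Tr(\cB^{\rT}Q\cB)$ representation for the $\cB$-block and $\Tr(\cC P\cC^{\rT})$ for the $\cC$-block, so that the relevant Gramian is constant in each case), whereas you vary all three blocks simultaneously from a single representation and recover the $\delta\cB$-contribution through the $\delta P$ term.
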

\begin{proof}
As discussed in Section~\ref{sec:bGamma}, the Frechet derivative $\d_{\Gamma} E $ inherits the block structure of the matrix $\Gamma$:
\begin{equation}
\label{dEdGamma_blocks}
    \d_{\Gamma} E
    =
    \begin{bmatrix}
        \d_{\cA} E  & \d_{\cB} E \\
        \d_{\cC} E  & 0
    \end{bmatrix}.
\end{equation}
We will now compute the blocks of this matrix. To calculate $\d_{\cA} E $, let $\cB$ and $\cC$ be fixed. Then the first variation of $ E $ with respect to $\cA$ is
$
    \delta  E
    =
    \bra
        \cC^{\rT}\cC,
        \delta P
    \ket
    =
    -
    \bra
        \cA^{\rT}Q+Q\cA,
        \delta P
    \ket
    =
    -
    \bra
        Q,
        \cA\delta P +(\delta P)\cA^{\rT}
    \ket
    =
    \bra
        Q,
        (\delta \cA) P +P\delta\cA^{\rT}
    \ket
    =
    2\bra
        H, \delta\cA
    \ket
$,
which implies that
\begin{equation}
\label{dEdA}
    \d_{\cA} E  = 2H.
\end{equation}
To compute $\d_{\cB} E $, suppose $\cA$ and $\cC$ are fixed. Then the observability Gramian $Q$, which is a function of $\cA$ and $\cC$, is also constant, and  the first variation of $ E $ with respect to $\cB$ is
$
    \delta  E
    =
    \bra
        Q,
        \delta(\cB\cB^{\rT})
    \ket
    =
    \bra
        Q,
        (\delta\cB)\cB^{\rT}+\cB\delta\cB^{\rT}
    \ket
    =
    2
    \bra
        Q\cB,
        \delta\cB
    \ket
$,
and hence,
\begin{equation}
\label{dEdB}
    \d_{\cB} E  = 2Q\cB.
\end{equation}
The derivative $\d_{\cC} E $ is calculated by a similar reasoning. Assuming $\cA$ and $\cB$ (and so also the controllability Gramian $P$) to be fixed, the first variation of $ E $ with respect to $\cC$ is
$
    \delta  E
    =
    \bra
        P,
        \delta(\cC^{\rT}\cC)
    \ket
    =
    \bra
        P,
        (\delta\cC)^{\rT}\cC+\cC^{\rT}\delta\cC
    \ket
    =
    2
    \bra
        \cC P,
        \delta\cC
    \ket
$,
which implies that
\begin{equation}
\label{dEdC}
    \d_{\cC} E  = 2\cC P.
\end{equation}
Now, substitution of (\ref{dEdA})--(\ref{dEdC}) into (\ref{dEdGamma_blocks}) yields (\ref{dEdGamma}).
\end{proof}
We will now take into account the dependence of the closed-loop system matrices $\cA$, $\cB$, $\cC$ in (\ref{cABC}) on the controller matrices $a$, $b$, $c$, with the latter still considered to be independent variables. In what follows, the Gramians $P$ and $Q$ in (\ref{PQ}) and the Hankelian $H$, defined by
(\ref{H}), inherit the four-block structure of the matrix $\cA$ from
(\ref{cABC}). Their blocks have size $(n\x n)$ and are numbered as
follows:
\begin{equation}
\label{blocks}
    H
    :=
    {\begin{array}{cc}
    {}_{\leftarrow n \rightarrow}  {}_{\leftarrow n\rightarrow} &\\
            {\small\begin{bmatrix}
                H_{11} & H_{12}\\
                H_{21} & H_{22}
            \end{bmatrix}}
    &\!\!\!\!\!
        \begin{matrix}
            \updownarrow\!{}^n\\
            \updownarrow\!{}_n
        \end{matrix}\\
        {}
    \end{array}}
    =
    {\begin{array}{cc}
    {}_{\leftarrow n \rightarrow}  {}_{\leftarrow n\rightarrow} &\\
            {\small\begin{bmatrix}
                H_{\bullet 1} & H_{\bullet 2}
            \end{bmatrix}}
    &\!\!\!\!\!
            \updownarrow^{2n}
        \\ {}
    \end{array}}
    =
    {\begin{array}{cc}
    {}_{\leftarrow 2n \rightarrow}\\
            {\small\begin{bmatrix}
                H_{1\bullet}\\
                H_{2\bullet}
            \end{bmatrix}}
    &\!\!\!\!\!
        \begin{matrix}
            \updownarrow\!{}^n\\
            \updownarrow\!{}_n
        \end{matrix}\\
        {}
    \end{array}}.
\end{equation}
The block $(\cdot)_{11}$ is related to the state variables of the
plant, while $(\cdot)_{22}$ pertains to those of the controller.
The blocks of the matrix $H$ in (\ref{blocks}) are expressed in terms of the block rows of $Q$ and block columns of $P$ as
$    H_{jk}
    =
    Q_{j\bullet}
    P_{\bullet k}
$.

\begin{lemma}
\label{lem:dEdgamma}
Suppose the matrix $\cA$ in (\ref{cABC}) is Hurwitz. Then the Frechet derivative $\d_{\gamma} E
    =
    \begin{bmatrix}
        \d_a E  & \d_b E \\
        \d_c E  & 0
    \end{bmatrix}
$ of $ E $ from (\ref{E})  with respect to the matrix $\gamma$ from (\ref{Gamma}) is
\begin{equation}
\label{dEdgamma}
    \d_{\gamma} E
    =
    2
    \begin{bmatrix}
        H_{22} &     H_{21}\bC^{\rT} + Q _{2\bullet}  \cB\bD^{\rT}\\
            B_2^{\rT}H_{12}    +    D_0^{\rT}  \cC        P _{\bullet 2} & 0
    \end{bmatrix},
\end{equation}
where the matrices $\Gamma_1$, $\Gamma_2$ are defined by (\ref{Gamma0_Gamma12}); $H$,  $P$, $Q$ are given by (\ref{H})--(\ref{PQ}), and the notation (\ref{blocks}) is used.
\end{lemma}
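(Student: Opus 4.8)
The plan is to obtain $\d_\gamma E$ from the already-computed derivative $\d_\Gamma E$ of Lemma~\ref{lem:dEdGamma} by differentiating through the affine map (\ref{Gamma}), $\Gamma = \Gamma_0 + \Gamma_1\gamma\Gamma_2$, and then reading off the blocks. Since $\Gamma_0$, $\Gamma_1$, $\Gamma_2$ are independent of $\gamma$, a variation $\delta\gamma$ induces $\delta\Gamma = \Gamma_1(\delta\gamma)\Gamma_2 = \[[[\Gamma_1,\Gamma_2\]]](\delta\gamma)$, that is, $\delta\Gamma$ is the image of $\delta\gamma$ under the grade one operator of Section~\ref{sec:class}. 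The first variation of the cost is therefore $\delta E = \bra \d_\Gamma E, \Gamma_1(\delta\gamma)\Gamma_2\ket$, and transferring $\Gamma_1$, $\Gamma_2$ across the Frobenius inner product by the adjoint relation $\[[[\Gamma_1,\Gamma_2\]]]^\dagger = \[[[\Gamma_1^\rT,\Gamma_2^\rT\]]]$ established in Section~\ref{sec:class} yields $\delta E = \bra \Gamma_1^\rT(\d_\Gamma E)\Gamma_2^\rT, \delta\gamma\ket$.

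Next I would note that the variation $\delta\gamma$ is confined to the sparsity subspace $\bGamma$, since the bottom-right block of $\gamma$ in (\ref{Gamma}) is identically zero. Hence the Frechet derivative is the orthogonal projection $\d_\gamma E = \bPi\big(\Gamma_1^\rT(\d_\Gamma E)\Gamma_2^\rT\big)$, with $\bPi$ discarding the bottom-right block as in (\ref{bPi}). It then remains to substitute the explicit $\Gamma_1$, $\Gamma_2$ from (\ref{Gamma0_Gamma12}) and $\d_\Gamma E = 2\begin{bmatrix} H & Q\cB \\ \cC P & 0\end{bmatrix}$ from Lemma~\ref{lem:dEdGamma}, and to carry out the two block products. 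Left multiplication by $\Gamma_1^\rT = \begin{bmatrix} 0 & I_n & 0 \\ B_2^\rT & 0 & D_0^\rT\end{bmatrix}$ extracts the second block-row of $H$ together with $Q_{2\bullet}\cB$, and forms the combinations $B_2^\rT H_{1\bullet} + D_0^\rT\cC P$ and $B_2^\rT Q_{1\bullet}\cB$; right multiplication by $\Gamma_2^\rT = \begin{bmatrix} 0 & \bC^\rT \\ I_n & 0 \\ 0 & \bD^\rT\end{bmatrix}$ then selects the appropriate block columns, throughout using the $(n\x n)$ block partition (\ref{blocks}) of $H$, $P$, $Q$.

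Reading off the resulting blocks gives $H_{22}$ in the $(1,1)$ position, $H_{21}\bC^\rT + Q_{2\bullet}\cB\bD^\rT$ in the $(1,2)$ position, and $B_2^\rT H_{12} + D_0^\rT\cC P_{\bullet 2}$ in the $(2,1)$ position, while the $(2,2)$ block is removed by $\bPi$; together with the overall factor $2$ this is exactly (\ref{dEdgamma}). I expect the computation to be essentially routine block algebra, and the only points genuinely requiring care to be the correct transpose placement of $\Gamma_1$, $\Gamma_2$ in the adjoint, the consistent bookkeeping of the partition (\ref{blocks}) so that the sub-blocks $H_{22}$, $H_{21}$, $H_{12}$ and the block row/column $Q_{2\bullet}$, $P_{\bullet 2}$ are matched up correctly, and verifying that the $(2,2)$ block does not vanish on its own but is genuinely annihilated by the projection $\bPi$.
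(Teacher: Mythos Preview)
Your proposal is correct and follows essentially the same approach as the paper: apply the chain rule through the affine map $\Gamma = \Gamma_0 + \Gamma_1\gamma\Gamma_2$ to get $\d_\gamma E = \bPi(\Gamma_1^{\rT}(\d_\Gamma E)\Gamma_2^{\rT})$, then substitute $\d_\Gamma E$ from Lemma~\ref{lem:dEdGamma} and the explicit $\Gamma_1$, $\Gamma_2$ from (\ref{Gamma0_Gamma12}) and read off the blocks. The paper's proof is organized identically, including the use of the projection $\bPi$ to account for the $\bGamma$-sparsity of $\delta\gamma$.
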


\begin{proof}
Since
$ E $ is a composite function of $a$, $b$, $c$ which enter
(\ref{E}) through the closed-loop system matrices $\cA$,
$\cB$, $\cC$, the chain rule gives
\begin{equation}
\label{dLdgamma}
    \d_{\gamma}  E
    =
    (
        \d_{\gamma} \Gamma
    )^{\dagger}
    (
        \d_{\Gamma}  E
    )
    =
    \bPi
    (
        \Gamma_1^{\rT}
        \d_{\Gamma}  E
        \Gamma_2^{\rT}
    ).
\end{equation}
Here, $(\cdot)^{\dagger}$ is the adjoint in the
sense of the Frobenius inner product of matrices, and
$\bPi$ is the orthogonal projection onto the subspace
$\bGamma$ defined by (\ref{bGamma})--(\ref{bPi}).
Indeed, the first variation of the affine map $\gamma\mapsto \Gamma$, defined by
(\ref{Gamma})--(\ref{Gamma0_Gamma12}), is given by $\delta \Gamma = \Gamma_1(\delta \gamma) \Gamma_2$, which implies that $\d_{\gamma}\Gamma = \[[[\Gamma_1, \Gamma_2\]]]$. Hence,
$
    \delta  E
    =
    \bra
        \d_{\Gamma}  E,
        \delta\Gamma
    \ket
    =
    \bra
        \d_{\Gamma}  E,
        \Gamma_1
        \delta\gamma
        \Gamma_2
    \ket
    =
    \bra
         \Gamma_1^{\rT}
        \d_{\Gamma}  E
        \Gamma_2^{\rT},
        \delta\gamma
    \ket
    =
    \bra
        \bPi
        (
             \Gamma_1^{\rT}
            \d_{\Gamma}  E
            \Gamma_2^{\rT}
        ),
        \delta\gamma
    \ket
$,
which establishes (\ref{dLdgamma}). Substitution of the matrices
$\Gamma_1$ and $\Gamma_2$ from (\ref{Gamma0_Gamma12}) and $\d_{\Gamma} E $ from (\ref{dEdGamma}) into  the right-hand side of (\ref{dLdgamma}) yields
\begin{align*}
    \d_{\gamma} E
    &=
    2
    \bPi
    \left(
        \begin{bmatrix}
            0 & I_n & 0\\
            B_2^{\rT} & 0 & D_0^{\rT}
        \end{bmatrix}
    \begin{bmatrix}
        H & Q\cB\\
        \cC P & 0
    \end{bmatrix}
        \begin{bmatrix}
            0 & \bC^{\rT}\\
            I_n & 0 \\
            0 & \bD^{\rT}
        \end{bmatrix}
        \right)\\
    &=
    2
    \begin{bmatrix}
        H_{22} &     H_{21}\bC^{\rT} + Q _{2\bullet}  \cB\bD^{\rT}\\
            B_2^{\rT}H_{12}    +    D_0^{\rT}  \cC        P _{\bullet 2} & 0
    \end{bmatrix},
\end{align*}
where
Lemma~\ref{lem:dEdGamma} and the notation (\ref{blocks}) are also used,
which proves (\ref{dEdgamma}).
\end{proof}
Finally, we will utilize the Hamiltonian parameterization (\ref{gamma}),  which makes
$ E $ a function of the matrices $R$ and $b$; see Fig.~\ref{fig:PR}.

\begin{theorem}
\label{th:optimal}
A physically realizable stabilizing controller, with Hamiltonian parameterization (\ref{gamma}),  is a critical point of the LQG cost $E$ from (\ref{E}) if and only if there exists a real antisymmetric matrix $\Phi$ of order $n$ such that
\begin{align}
\label{dEdR=0}
    H_{22}
    &=
    -\Phi J_0,\\
\nonumber
 \fM(b)
    +
    H_{21} \bC^{\rT}
    +
    Q_{21} B \bD^{\rT} \ \ \,\quad\qquad&\\
\label{dEdb=0}
    +
    J_0(H_{12}^{\rT} B_2 + P_{21}C_0^{\rT}D_0)J_2\bI^{\rT}
    &=0.
\end{align}
Here,
\begin{equation}
\label{fM}
    \fM
    :=
    \[[[
        \Phi,J
        \mid
        Q_{22},\bD\bD^{\rT}
        \mid
        J_0P_{22}J_0,\bI J_2D_0^{\rT}D_0 J_2 \bI^{\rT}
    \]]]
\end{equation}
is a self-adjoint operator of grade three in the sense of (\ref{cLsum}).
\end{theorem}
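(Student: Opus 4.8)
The plan is to derive the two stationarity conditions by feeding the block formula for $\d_{\gamma}E$ from Lemma~\ref{lem:dEdgamma} into the chain rule, regarding $E$ as a function of the independent parameters $R\in\mS_n$ and $b$ through the Hamiltonian parameterization (\ref{gamma}). Since $(R,b)\mapsto\gamma$ is smooth, a physically realizable stabilizing controller is a critical point of $E$ precisely when $\d_R E=0$ and $\d_b E=0$, and I will establish these two conditions separately, deriving (\ref{dEdR=0}) first so that the resulting antisymmetric matrix $\Phi$ is available when treating (\ref{dEdb=0}).

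For the $R$-derivative, the parameter $R$ enters $\gamma$ only through the summand $J_0 R$ of the $(1,1)$-block $a$, so that with $b$ held fixed one has $\delta\gamma=\begin{bmatrix}J_0\,\delta R & 0\\ 0 & 0\end{bmatrix}$. Pairing this with $\d_{\gamma}E$ and using that its $(1,1)$-block equals $2H_{22}$ gives $\delta E=2\bra H_{22},J_0\,\delta R\ket=\bra -2J_0H_{22},\delta R\ket$. Because $\delta R$ ranges over symmetric matrices, $\d_R E$ is the symmetric part of $-2J_0H_{22}$, whose vanishing is equivalent to $J_0H_{22}$ being antisymmetric, i.e. $J_0H_{22}=H_{22}^{\rT}J_0$. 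Using $J_0^2=-I_n$, I will check that this is in turn equivalent to the existence of a real antisymmetric $\Phi$ with $H_{22}=-\Phi J_0$ (explicitly $\Phi=H_{22}J_0$), which is exactly (\ref{dEdR=0}).

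For the $b$-derivative the essential feature is that $b$ occurs in three places in (\ref{gamma}): quadratically via $\tfrac12 bJb^{\rT}J_0$ inside $a$, directly in the $(1,2)$-block, and linearly via $c=J_2\bI^{\rT}b^{\rT}J_0$ in the $(2,1)$-block. I will compute the induced variations $\delta a$ and $\delta c$, pair each with the corresponding block of $\d_{\gamma}E$ from Lemma~\ref{lem:dEdgamma}, and rewrite every contribution in the form $\bra\,\cdot\,,\delta b\ket$ using the cyclic property of the trace and the identity $\Tr(M(\delta b)^{\rT})=\bra M,\delta b\ket$, keeping careful track of the antisymmetry of $J$, $J_0$, $J_2$. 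The product rule on the quadratic term produces $H_{22}J_0bJ$ and $J_0H_{22}^{\rT}bJ$, whose sum $(H_{22}J_0+J_0H_{22}^{\rT})bJ$ collapses to $2\Phi bJ$ once the already-established relation $H_{22}=-\Phi J_0$ is substituted. After inserting the block rows of $\cB=\begin{bmatrix}B\\ b\bD\end{bmatrix}$, the block columns of $\cC=\begin{bmatrix}C_0 & D_0 c\end{bmatrix}$, and the relation $c^{\rT}=J_0 b\bI J_2$ coming from (\ref{c}), I will group the terms carrying an explicit factor of $b$: these, namely $\Phi bJ$, $Q_{22}b\bD\bD^{\rT}$ and $J_0P_{22}J_0\,b\,\bI J_2D_0^{\rT}D_0J_2\bI^{\rT}$, are precisely $\fM(b)$ from (\ref{fM}), while the remaining terms assemble into $H_{21}\bC^{\rT}+Q_{21}B\bD^{\rT}+J_0(H_{12}^{\rT}B_2+P_{21}C_0^{\rT}D_0)J_2\bI^{\rT}$. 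Dividing out the common factor $2$ and setting the result to zero yields (\ref{dEdb=0}).

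The main obstacle will be the bookkeeping in the $b$-derivative: correctly differentiating the quadratic term $\tfrac12 bJb^{\rT}J_0$ (two product-rule contributions together with several transpositions) and recognising the three surviving terms that carry an explicit $b$ as a single grade-three self-adjoint operator of the form (\ref{cLsum}). The reduction of the quadratic part to $2\Phi bJ$ is available only after the first-order condition (\ref{dEdR=0}) has been established, so the order in which the two conditions are derived is exactly what makes the identification of $\fM$ transparent.
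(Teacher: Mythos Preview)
Your proposal is correct and follows the same chain-rule strategy as the paper. The only difference is organisational: the paper defines $\Phi := (H_{22}J_0 + J_0H_{22}^{\rT})/2$ unconditionally (this matrix is always antisymmetric, since $(H_{22}J_0)^{\rT}=-J_0H_{22}^{\rT}$), so the identity $(H_{22}J_0+J_0H_{22}^{\rT})\,bJ=2\Phi bJ$ holds without first invoking $\d_R E=0$, and the two stationarity conditions are in fact derived independently of one another --- your remark that the order of derivation is what makes the identification of $\fM$ possible is therefore a slight overstatement, though it does no harm to the validity of your argument.
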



\begin{proof}
In view of (\ref{gamma}), the symmetric matrix  $R$ enters the controller only through $a$. Hence,
\begin{equation}
\label{dEdR}
    \d_R E
    =
    (-J_0 \d_a E +(-J_0 \d_a E )^{\rT})/2
    =
    H_{22}^{\rT}J_0
    -
    J_0 H_{22},
\end{equation}
where the relation $\d_a  E  = 2H_{22}$ from Lemma~\ref{lem:dEdgamma} is used.
Unlike $R$, the matrix $b$ both enters $a$ and completely parameterizes $c$, and hence,
\begin{align}
\nonumber
    \rd E /\rd b
    =&
    ((\d_a E ) J_0 +J_0(\d_a E )^{\rT})bJ/2
     + \d_b E\\
\nonumber
     &+ J_0(\d_c E )^{\rT}J_2\bI^{\rT}\\
\nonumber
     =&
    (H_{22} J_0 +J_0H_{22}^{\rT})bJ
    + 2(    H_{21} \bC^{\rT} + Q _{2\bullet}  \cB\bD^{\rT})\\
\label{dEdb}
    &+ 2J_0(    B_2^{\rT}H_{12} + D_0^{\rT}  \cC        P _{\bullet 2})^{\rT}J_2\bI^{\rT},
\end{align}
where (\ref{dEdgamma}) of Lemma~\ref{lem:dEdgamma} is used again.
By introducing a real antisymmetric matrix
\begin{equation}
\label{Phi}
    \Phi
    :=
    (H_{22} J_0 +J_0H_{22}^{\rT})/2,
\end{equation}
and recalling (\ref{cABC}), (\ref{bB_CD}) and (\ref{blocks}), it follows from (\ref{dEdb}) that
\begin{align*}
    (\rd  E /\rd b)/2
     = &
    \Phi bJ
     + H_{21} \bC^{\rT}   +    Q_{21} B \bD^{\rT} + Q_{22} b \bD\bD^{\rT}\\
     &+ J_0(    H_{12}^{\rT}B_2 + P_{21} C_0^{\rT} D_0)J_2\bI^{\rT}\\
     &+ J_0P_{22} J_0 b \bI J_2 D_0^{\rT} D_0J_2\bI^{\rT}\\
    =&
    H_{21} \bC^{\rT}   +    Q_{21} B \bD^{\rT}\\
    &+ J_0(    H_{12}^{\rT}B_2 + P_{21} C_0^{\rT} D_0)J_2\bI^{\rT} + \fM(b),
\end{align*}
where (\ref{c}) and (\ref{fM}) are also used. Therefore, $\rd E /\rd b=0$ is equivalent to (\ref{dEdb=0}). The definition
(\ref{Phi}), which is considered as an equation with respect to $H_{22}$,  determines uniquely   the skew-Hamiltonian part $-\Phi J_0$ of $H_{22}$, so that $H_{22}$ can be represented as
\begin{equation}
\label{H22PhiPsi}
    H_{22} = (\Psi-\Phi)J_0,
\end{equation}
where
\begin{equation}
\label{Psi}
    \Psi
    :=
    (J_0H_{22}^{\rT}-H_{22}J_0)/2
\end{equation}
is a real symmetric matrix of order $n$. Direct comparison of (\ref{Psi}) with (\ref{dEdR}) yields
\begin{equation}
\label{dEdRPsi}
    \d_R  E
    =
    -2J_0\Psi J_0.
\end{equation}
Hence, $\d_R E  = 0$ holds if and only if $\Psi=0$, in which case, (\ref{H22PhiPsi}) takes the form of (\ref{dEdR=0}). Therefore, the property that the controller is a critical point of $ E $ (that is, $\d_R E  = 0$ and $\rd  E /\rd b=0$) is indeed equivalent to the fulfillment of (\ref{dEdR=0}) and (\ref{dEdb=0}) for a real antisymmetric matrix $\Phi$ of order $n$. \end{proof}


For a given matrix $b$ in the Hamiltonian parameterization (\ref{gamma}) of the controller,  (\ref{Psi}) defines a map $\bR(b)\ni R\mapsto \Psi\in \mS_n$ on the set
\begin{equation}
\label{bR}
    \bR(b)
    :=
    \{
        R\in \mS_n:\
        \cA\
        {\rm is\ Hurwitz}
    \}.
\end{equation}
In view of (\ref{dEdRPsi}),  the Frechet derivative of this map with respect to $R$ is expressed in terms of the second order Frechet derivative of the LQG cost of the closed-loop system as
\begin{equation}
\label{dPsidR}
    \d_R\Psi
    =
    -\frac{1}{2}\[[[J_0, J_0\]]]
    \d_R^2 E,
\end{equation}
where we have also used the property that $\[[[J_0, J_0\]]]$ is involutory since
$\[[[J_0, J_0\]]]^2 \!=\! \[[[J_0^2, J_0^2\]]] \!=\! \[[[-I, -I\]]]$ is the identity operator.

\section{A Quasi-separation principle}\label{sec:separation}

The operator $\fM$, which is defined by (\ref{fM}) and acts on the controller gain matrix $b$ from (\ref{bB_CD}), can be partitioned as
\begin{equation}
\label{MMM}
    \fM(b)
    =
    \begin{bmatrix}
        \fM_1(b_1) &
        \fM_2(b_2)
    \end{bmatrix}
\end{equation}
into two operators acting separately on the submatrices $b_1$ and $b_2$. Here,
\begin{align}
\label{fM1}
    \fM_1
    :=&
    \[[[
        \Phi,J_2
        \mid
        Q_{22},I
        \mid
        J_0P_{22}J_0,J_2D_0^{\rT}D_0 J_2
    \]]], \\
\label{fM2}
    \fM_2
    :=&
    \[[[
        \Phi,DJ_1D^{\rT}
        \mid
        Q_{22},DD^{\rT}
    \]]]
\end{align}
are self-adjoint operators of grades three and two. This allows
the equation (\ref{dEdb=0}) for $\rd E /\rd b = 0$ to be split into
\begin{align}
\label{dEdb1=0}
    \fM_1(b_1)
    +
    Q_{21} B_2
    +
    J_0(H_{12}^{\rT} B_2 + P_{21}C_0^{\rT}D_0)J_2
    &= 0,   \\
\label{dEdb2=0}
    \fM_2(b_2)
    +
    H_{21} C^{\rT}
    +
    Q_{21} B_1 D^{\rT} &= 0,
\end{align}
which are equivalent to $\rd E /\rd b_1 = 0$ and $\rd E /\rd b_2 = 0$.
Note that (\ref{dEdb1=0}) corresponds to the equation for the state-feedback matrix
\begin{equation}
\label{chat}
    \wh{c}
    =
    -(D_0^{\rT}D_0)^{-1}
    (B_2^{\rT}\wh{Q}_1 + D_0^{\rT}C_0)
\end{equation}
of the standard LQG controller for the subsidiary classical plant (\ref{class_plant}),
while  (\ref{dEdb2=0}) corresponds to the equation for the Kalman filter observation gain
matrix of the controller
\begin{equation}
\label{b2hat}
    \wh{b}_2
    =
        (\wh{P}_1 C^{\rT}+ B_1D^{\rT})(DD^{\rT})^{-1}.
\end{equation}
Here, it is assumed that the matrix $D_0$ is of full column rank, and $D$ is of full row rank. The matrices $\wh{c}$ and $\wh{b}_2$ from (\ref{chat}) and (\ref{b2hat}) determine the dynamics matrix of the standard LQG controller as $
    \wh{a}
     :=
    A -\wh{b}_2C+B_2\wh{c}$
and
are expressed in terms of the stabilizing
solutions $\wh{Q}_1$, $\wh{P}_1$ of the independent control and filtering algebraic Riccati equations (AREs):
\begin{align*}
    A^{\rT} \wh{Q}_1 & +  \wh{Q}_1 A + C_0^{\rT}C_0 \\
    &=
    (\wh{Q}_1 B_2+ C_0^{\rT}D_0)
    (D_0^{\rT}D_0)^{-1}
    (\wh{Q}_1 B_2+ C_0^{\rT}D_0)^{\rT},\\
    A \wh{P}_1 &+ \wh{P}_1 A^{\rT} + B_1B_1^{\rT} \\
    & =
    (\wh{P}_1 C^{\rT}+ B_1D^{\rT})(DD^{\rT})^{-1}
    (\wh{P}_1 C^{\rT}+ B_1D^{\rT})^{\rT}.
\end{align*}
%
%
The fact, that (\ref{dEdb1=0}) and (\ref{dEdb2=0}) are independent linear equations with respect to $b_1$ and $b_2$, as well as the original partition (\ref{MMM}), can be interpreted as an  analogue of the classical LQG control/filtering separation principle for the CQLQG problem.
In turn, each of the operators $\fM_k$ from (\ref{fM1}) and (\ref{fM2}) can be split into the sum of
self-adjoint  operators $\fM_k^{\diamond}$ and $\fM_k^+$ of grades one and less one:
\begin{align}
\label{fM1_sign}
    \fM_1
    :=&
    \overbrace{
    \[[[
        \Phi,J_2
    \]]]}^{\fM_1^{\diamond}}
    +
    \overbrace{
    \[[[
        Q_{22},I
        \mid
        J_0P_{22}J_0,J_2D_0^{\rT}D_0 J_2
    \]]]}^{\fM_1^+}, \\
\label{fM2_sign}
    \fM_2
    :=&
    \underbrace{
    \[[[
        \Phi,DJ_1D^{\rT}
    \]]]}_{\fM_2^{\diamond}}
    +
    \underbrace{
    \[[[
        Q_{22},DD^{\rT}
    \]]]}_{\fM_2^+}.
\end{align}
By applying Lemma~\ref{lem:spec}, it follows that the spectrum of $\fM_k^{\diamond}$ is symmetric about the origin, while $\fM_k^+\succcurlyeq 0$. Moreover, if $Q_{22} \succ 0$, or $P_{22}\succ 0$ and $D_0$ in (\ref{cZ}) is of full column rank,  then $\fM_1^+\succ 0$. Indeed, the fulfillment of at least one of these conditions implies positive definiteness of at least one of  the positive semi-definite operators on the right-hand side of the representation
\begin{equation}
\label{fM1+}
    \fM_1^+
    =
    \[[[
        Q_{22},I
    \]]]
    +
    \[[[
        J_0P_{22}J_0^{\rT},J_2D_0^{\rT}D_0 J_2^{\rT}
    \]]]
\end{equation}
which follows from $J_0$ and $J_2$ being antisymmetric matrices.
Similarly, the conditions that $Q_{22}\succ 0$ and $D$ is of full row rank ensure that $\fM_2^+ \succ 0$.
In particular, by adapting  \cite[Lemma~5 of Section~VIII]{VP_2010}, it follows that if, in addition to the rank conditions on $D_0$ and $D$,   the controller state-space realization is minimal, then $Q_{22}\succ 0$ and $P_{22}\succ 0$ and hence, $\fM_1^+\succ 0$ and $\fM_2^+\succ 0$.
Therefore, in the cases discussed above, the invertibility of the operators $\fM_1$ and $\fM_2$ in (\ref{fM1_sign})--(\ref{fM2_sign}) can only be destroyed by the presence of the indefinite operators $\fM_1^{\diamond}$ and $\fM_2^{\diamond}$ if the matrix $\Phi$ is large enough  compared to $Q_{22}$. This can be formulated in terms of the matrix
\begin{equation}
\label{dom}
    \Delta
    :=
    Q_{22}^{-1}\Phi
\end{equation}
whose spectrum is pure imaginary and symmetric about zero.

\begin{lemma}
\label{lem:dom}
Suppose the matrix $D$ in (\ref{y}) is of full row rank and $Q_{22}\succ 0$. Also, suppose the spectral radius of the matrix $\Delta$ from (\ref{dom}) satisfies $\br(\Delta)<1$. Then the operators $\fM_1$ and $\fM_2$ in (\ref{fM1}) and (\ref{fM2}) are positive definite.
\end{lemma}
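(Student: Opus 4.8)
The plan is to reduce the positive definiteness of both $\fM_1$ and $\fM_2$ to a single elementary fact about grade-one operators generated by antisymmetric matrices. I would first record the following \emph{observation}: if $A\in\mR^{n\x n}$ and $G\in\mR^{m\x m}$ are both antisymmetric, then the self-adjoint operator $\mathrm{Id}+\[[[A,G\]]]$ on $\mR^{n\x m}$, where $\mathrm{Id}=\[[[I,I\]]]$ denotes the identity, is positive definite if and only if $\br(A)\br(G)<1$. This is immediate from Lemma~\ref{lem:spec} and (\ref{rrr}): the spectrum of $\[[[A,G\]]]$ is real, symmetric about the origin, and of radius $\br(A)\br(G)$, so its least eigenvalue equals $-\br(A)\br(G)$, and that of $\mathrm{Id}+\[[[A,G\]]]$ equals $1-\br(A)\br(G)$.

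The second preparatory step is to introduce the antisymmetric matrix $A:=Q_{22}^{-1/2}\Phi Q_{22}^{-1/2}$, which is well defined since $Q_{22}\succ0$. A direct computation gives $\Delta=Q_{22}^{-1/2}AQ_{22}^{1/2}$, so $A$ is similar to the matrix $\Delta$ from (\ref{dom}); hence $\br(A)=\br(\Delta)<1$ by hypothesis. I would also use throughout that the canonical matrices $J_1$ and $J_2$ satisfy $J_k^2=-I$, so that they are orthogonal with $\br(J_k)=1$.

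For $\fM_1$, I would first drop the second summand of $\fM_1^+$ in (\ref{fM1+}), which is positive semi-definite by Lemma~\ref{lem:spec} (using $P_{22}\succcurlyeq0$), to obtain the operator lower bound $\fM_1\succcurlyeq\[[[Q_{22},I\]]]+\[[[\Phi,J_2\]]]$. Applying the congruence $X\mapsto Y:=Q_{22}^{1/2}X$, a linear bijection of $\mR^{n\x m_2}$, to the right-hand side transforms $\bra X,(\[[[Q_{22},I\]]]+\[[[\Phi,J_2\]]])X\ket$ into $\bra Y,(\mathrm{Id}+\[[[A,J_2\]]])Y\ket$ (a one-line computation using cyclicity of the trace, with $\bra X,Q_{22}X\ket=\|Y\|^2$). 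Since $\br(A)\br(J_2)=\br(\Delta)<1$, the observation yields positive definiteness of the lower bound and hence $\fM_1\succ0$.

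For $\fM_2$, full row rank of $D$ gives $DD^{\rT}\succ0$, and I would apply the congruence $X\mapsto Y:=Q_{22}^{1/2}X(DD^{\rT})^{1/2}$, again a bijection, which turns $\bra X,\fM_2(X)\ket$ into $\bra Y,(\mathrm{Id}+\[[[A,B\]]])Y\ket$ with the antisymmetric matrix $B:=(DD^{\rT})^{-1/2}DJ_1D^{\rT}(DD^{\rT})^{-1/2}$. Writing $B=W^{\rT}J_1W$ with the isometry $W:=D^{\rT}(DD^{\rT})^{-1/2}$ (so that $W^{\rT}W=I$), the matrix $B$ is a compression of the orthogonal matrix $J_1$; therefore its largest singular value, and a fortiori $\br(B)$, does not exceed $\br(J_1)=1$. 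Consequently $\br(A)\br(B)\<\br(\Delta)<1$, and the observation gives $\fM_2\succ0$. The only step that is not mere bookkeeping around Lemma~\ref{lem:spec} and (\ref{rrr}) is this spectral bound $\br(B)\<1$ for the compressed noise matrix; everything else follows once the two congruences are in place.
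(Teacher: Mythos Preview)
Your proof is correct and follows essentially the same route as the paper's: you drop the same positive semi-definite summand $\[[[J_0P_{22}J_0,\,J_2D_0^{\rT}D_0J_2\]]]$ from $\fM_1$, reduce both operators to the form $\mathrm{Id}+\[[[A,G\]]]$ with antisymmetric $A,G$, and bound the spectral radius using $\br(J_2)=1$ and the compression estimate $\br\big((DD^{\rT})^{-1/2}DJ_1D^{\rT}(DD^{\rT})^{-1/2}\big)\<1$, exactly as the paper does. The only difference is presentational: you make the symmetrizing congruences $X\mapsto Q_{22}^{1/2}X$ and $X\mapsto Q_{22}^{1/2}X(DD^{\rT})^{1/2}$ explicit, whereas the paper encodes the same reduction in the operator inequalities $\fM_1\succcurlyeq(1-\br(\Delta))\[[[Q_{22},I\]]]$ and $\fM_2\succcurlyeq(1-\br(\Delta))\fM_2^+$ via the spectral-radius identity (\ref{rrr}).
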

\begin{proof}
Since $\[[[ J_0 P_{22}J_0, J_2D_0^{\rT}D_0 J_2\]]]\succcurlyeq 0$, and $\[[[Q_{22}, I\]]]\succ 0$ (in view of the assumption $Q_{22}\succ 0$), then (\ref{fM1_sign}) and (\ref{fM1+}) imply that
\begin{equation}
\label{lower1}
    \fM_1
    \succcurlyeq
    \fM_1^{\diamond}
    +
    \[[[
        Q_{22}, I
    \]]]
    \succcurlyeq
    (1-\br(\Delta))
    \[[[
        Q_{22}, I
    \]]].
\end{equation}
Here, we use the relation
$
    \br(\[[[Q_{22},I\]]]^{-1} \fM_1^{\diamond})
    =
    \br(\Delta)\br(J_2)= \br(\Delta)
$ which follows from (\ref{rrr}) and the property that the eigenvalues of the canonical antisymmetric matrix $J_2$ are $\pm i$. Therefore, if $\br(\Delta)<1$, then (\ref{lower1}) implies that $\fM_1\succ 0$. By a similar reasoning, under the additional assumption that $D$ is of full row rank (that is, $DD^{\rT}\succ 0$),  it follows from (\ref{fM2_sign}) and (\ref{dom}) that
$
    \fM_2
    \succcurlyeq
    (1-\br(\Delta))
    \fM_2^+\succ 0.
$
Indeed,
$
    \br((\fM_2^+)^{-1}\fM_2^{\diamond})
    =
    \br(\Delta)
    \br(DJ_1D^{\rT}(DD^{\rT})^{-1})
    \< \br(\Delta)
$
since $-I\preccurlyeq iJ_1\preccurlyeq I$ and the Hermitian matrix $(DD^{\rT})^{-1/2}D(iJ_1)D^{\rT}(DD^{\rT})^{-1/2}$ has all its spectrum in $[-1,1]$, so that $\br(DJ_1D^{\rT}(DD^{\rT})^{-1}) \< 1$.
\end{proof}

Assuming invertibility of the operators $\fM_1$ and $\fM_2$ (for example, the fulfillment of conditions of Lemma~\ref{lem:dom} that ensure a stronger property -- positive definiteness  of these  operators), the equations (\ref{dEdb1=0}) and (\ref{dEdb2=0}) can be written more explicitly for $b_1$ and $b_2$:
\begin{align}
\label{b1_sol}
    b_1
    &=
    -
    \fM_1^{-1}
    (Q_{21} B_2
    +
    J_0(H_{12}^{\rT} B_2 + P_{21}C_0^{\rT}D_0)J_2),\\
\label{b2_sol}
    b_2
    &=
    -
    \fM_2^{-1}
    (
    H_{21} C^{\rT}
    +
    Q_{21} B_1 D^{\rT}).
\end{align}
These two  equations are, in principle, amenable to further reduction (to be discussed elsewhere) and will be utilized as assignment operators in the iterative procedure of Section~\ref{sec:newton} for finding the optimal controller.


\section{Second order condition for optimality}\label{sec:d2EdR2}

A second order necessary condition for optimality of the controller with respect to the matrix $R$ of the Hamiltonian parameterization (\ref{gamma}) is the positive semi-definiteness $\d_R^2 E\succcurlyeq 0$ of the appropriate second Frechet derivative of the LQG cost (\ref{E}). Moreover, the positive definiteness $\d_R^2 E\succ 0$ is sufficient for the local strict optimality. To compute the self-adjoint operator $\d_R^2 E$, which acts on the subspace $\mS_n$ of real symmetric matrices of order $n$,  we define a linear operator  $\cJ:\mS_n\to \mR^{2n\x 2n}$ as an appropriate restriction of the grade one linear operator relating $\cA$ with $R$:
\begin{equation}
\label{cJ}
    \cJ
    :=
    \left.
    \[[[
    \begin{bmatrix}
        0_n\\
        J_0
    \end{bmatrix},
    \begin{bmatrix}
        0_n & I_n
    \end{bmatrix}
    \]]]
    \right|_{\mS_n}.
\end{equation}
Its adjoint is
$\cJ^{\dagger}
    =
    -
    \cS
    \[[[
    \begin{bmatrix}
        0_n &
        J_0
    \end{bmatrix},
    \begin{bmatrix}
        0_n \\ I_n
    \end{bmatrix}
    \]]]
$, since $J_0$ is antisymmetric, with $\cS:\mR^{n\x n}\to \mS_n$ the symmetrizer defined by (\ref{cS}).

\begin{lemma}
\label{lem:dEdR2}
Suppose the matrix $\cA$ in (\ref{cABC}) is Hurwitz. Then the second Frechet derivative of $ E $ from (\ref{E})  with respect to the matrix $R$ from (\ref{gamma}) is
\begin{equation}
\label{dEdR2}
    \d_R^2  E
    =
    4\cJ^{\dagger}(\cQ \cL_{\cA} \cS \cP + \cP \cL_{\cA^{\rT}} \cS \cQ)\cJ.
\end{equation}
Here, $\cL_A$ and $\cS$ are the inverse Lyapunov operator and symmetrizer from (\ref{ILO}), (\ref{cS}), and
$\cQ := \[[[Q, I\]]]$ and $\cP:= \[[[I, P\]]]$ are grade one self-adjoint operators (see Section~\ref{sec:class}) of the left and right multiplication by the observability and controllability Gramians $Q$ and $P$ of the closed-loop system from (\ref{PQ}).
\end{lemma}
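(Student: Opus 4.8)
The plan is to use that, with the matrix $b$ in the Hamiltonian parameterization (\ref{gamma}) held fixed, the symmetric parameter $R$ enters the closed-loop system only through the dynamics matrix $\cA$ in (\ref{cABC}), and does so affinely: only the $(2,2)$-block $a$ of $\cA$ depends on $R$, with $\delta a = J_0\,\delta R$, so that $\delta\cA = \cJ(\delta R)$ for the operator $\cJ$ from (\ref{cJ}), whereas $\cB$ and $\cC$ are independent of $R$. The chain rule therefore factors every $R$-derivative through $\cA$, pre- and post-composed with $\cJ$ and its adjoint $\cJ^{\dagger}$. First I would record the first-order consequence: since $\d_{\cA}E=2H$ by (\ref{dEdA}), the gradient is
\begin{equation*}
    \d_R E = \cJ^{\dagger}(\d_{\cA}E) = 2\,\cJ^{\dagger}(H),
\end{equation*}
which, upon evaluating $\cJ^{\dagger}$, reproduces (\ref{dEdR}). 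Because $\cJ^{\dagger}$ is a fixed linear operator and the only $R$-dependence on the right is that of $H=QP$ through $\cA$, differentiating this identity once more yields
\begin{equation*}
    \d_R^2 E = 2\,\cJ^{\dagger}\circ(\d_{\cA}H)\circ\cJ,
\end{equation*}
so the task reduces to computing $\d_{\cA}H$, with $\cB$ and $\cC$ kept fixed.

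For the latter I would apply the product rule $\delta H = (\delta Q)P + Q(\delta P) = \cP(\delta Q) + \cQ(\delta P)$ and then perturb the two Lyapunov equations (\ref{PQ}). Varying the controllability equation with $\cB$ fixed gives $\cA(\delta P) + (\delta P)\cA^{\rT} + \big((\delta\cA)P + P(\delta\cA)^{\rT}\big)=0$; since $P=P^{\rT}$, the inhomogeneity is symmetric and equals $2\cS\cP(\delta\cA)$, so inverting the Lyapunov operator gives $\delta P = 2\cL_{\cA}\cS\cP(\delta\cA)$. The symmetric argument applied to the observability equation gives $\delta Q = 2\cL_{\cA^{\rT}}\cS\cQ(\delta\cA)$. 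Substituting these into the product rule yields
\begin{equation*}
    \d_{\cA}H = 2\big(\cQ\cL_{\cA}\cS\cP + \cP\cL_{\cA^{\rT}}\cS\cQ\big),
\end{equation*}
and inserting this into the expression for $\d_R^2 E$ above produces (\ref{dEdR2}), with the prefactor $4=2\cdot 2$.

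As a consistency check I would confirm self-adjointness on $\mS_n$. It is automatic, since $\d_R^2 E$ is the second Frechet derivative of a twice-differentiable scalar function; it can also be read off the formula, because $\cQ$, $\cP$ are self-adjoint (as $Q$, $P$ are symmetric), $\cS$ is self-adjoint, $\cL_{\cA}^{\dagger}=\cL_{\cA^{\rT}}$, and $\cS$ commutes with $\cL_{\cA^{\rT}}$ (the Lyapunov operator preserves the splitting of matrices into symmetric and antisymmetric parts), so that the two summands of $\d_{\cA}H$ are mutually adjoint and conjugation by $\cJ,\cJ^{\dagger}$ preserves self-adjointness.

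I expect the main obstacle to be the rigorous justification of the perturbation formulas $\delta P = 2\cL_{\cA}\cS\cP(\delta\cA)$ and $\delta Q = 2\cL_{\cA^{\rT}}\cS\cQ(\delta\cA)$: one must verify that the inverse Lyapunov operators are well defined (guaranteed by $\cA$ being Hurwitz), that the inhomogeneities are genuinely symmetric so the symmetrizer $\cS$ can be inserted in the indicated position, and that the sign conventions of $\cL_{\cA}$ from (\ref{ILO}) are compatible with the form of the Lyapunov equations (\ref{PQ}). This is precisely the perturbation-of-inverse-Lyapunov-operators material developed in the appendix, and it is where the order of composition and the placement of $\cS$, $\cP$, $\cQ$ must be pinned down exactly.
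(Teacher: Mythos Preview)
Your proposal is correct and follows essentially the same route as the paper: the paper also observes that $R$ enters $E$ only through $\cA$, affinely with $\d_R\cA=\cJ$, and concludes $\d_R^2 E=\cJ^{\dagger}(\d_{\cA}^2 E)\cJ$, then invokes the appendix lemma that computes $\d_{\cA}^2 E=4(\cQ\cL_{\cA}\cS\cP+\cP\cL_{\cA^{\rT}}\cS\cQ)$ by exactly the Lyapunov-perturbation argument you outline. The only difference is organizational---the paper packages the computation of $\d_{\cA}H$ (equivalently $\tfrac12\d_{\cA}^2 E$) into a separate lemma in the appendix, whereas you carry it out inline.
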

\begin{proof}
The matrix $R$ only enters the cost $E$ through the matrix $\cA$ of the closed-loop system, and $\cA$ depends affinely on $R$, with $\d_R \cA=\cJ$ the constant operator from (\ref{cJ}). Hence, (\ref{dEdR2}) follows from $\d_R^2 E = \cJ^{\dagger}\d_{\cA}^2 E\cJ$ and  Lemma~\ref{lem:dEdA2} of \ref{sec:d2EdA2}.
\end{proof}

From (\ref{dEdR2}), it follows that  the ``matrix'' representation of the self-adjoint operator $\d_R^2 E$ on the space $\mS_n$ is described by
$$
    \vech(\d_R^2E(M))
    =
    4\Ups^{\rT} (\Omega + \Omega^{\rT})\Ups \vech(M),
$$
where $\vech(M)$ denotes the half-vectorization of a matrix $M \in \mS_n$, that is, the column-wise vectorization of its triangular part below (and including) the main diagonal. Here, the square matrix
$$
    \Omega
    :=
    -(I_{2n}\ox Q)(I_{2n}\ox \cA +\cA\ox I_{2n})^{-1}
    \Sigma (P\ox I_{2n})
$$
of order $4n^2$ represents the operator $\cQ \cL_{\cA} \cS \cP$ on $\mR^{2n\x 2n}$, with $\Sigma$ corresponding to the symmetrizer $\cS: \mR^{2n\x 2n}\to \mS_{2n}$. Also,
$$
    \Ups
    :=
    \left(
        \begin{bmatrix}
            0_n \\ I_n
        \end{bmatrix}
        \ox
        \begin{bmatrix}
            0_n\\
            J_0
        \end{bmatrix}
    \right)
    \Lambda
$$
is a $(4n^2\x n(n+1)/2)$-matrix which represents the operator $\cJ$, defined by (\ref{cJ}), with $\Lambda\in \mR^{n^2\x n(n+1)/2}$ the ``duplication'' matrix  \cite{M_1988,SIG_1998} which expresses the full vectorization of a matrix $M \in \mS_n$ in terms of its half-vectorization by $\vec(M) = \Lambda \vech(M)$.

\section{A Newton-like scheme}\label{sec:newton}

The equations (\ref{b1_sol})--(\ref{b2_sol}) can be combined with iterations for solving the equation $\Psi=0$ for the matrix $\Psi$ from (\ref{Psi}), which is equivalent to the stationarity of the LQG cost $E$ with respect to the matrix $R$ of the Hamiltonian parameterization. The latter part of the scheme, aimed at finding a  root $R\in \bR(b)$ of the equation $\Psi = 0$ from the set (\ref{bR}), can be organized in the form of Newton-Raphson iterations
\begin{equation}
\label{RR}
    R
    \mapsto
    R - (\d_R \Psi)^{-1}(\Psi)
    =
    R - (\d_R^2 E)^{-1}(\d_R E).
\end{equation}
Here, the symmetric matrices $\d_R E$ and $\Psi$ are related by (\ref{dEdRPsi}), and, in view of  (\ref{dPsidR}), the inverse of the operator $\d_R \Psi$ is given by
\begin{equation}
\label{invdPsidR}
    (\d_R\Psi)^{-1}
    =
    -2(\d_R^2 E)^{-1} \[[[J_0, J_0\]]],
\end{equation}
where we have again used the involutional property of the operator $\[[[J_0, J_0\]]]$, and the second order Frechet derivative $\d_R^2 E$ is provided by Lemma~\ref{lem:dEdR2}. If the local strict optimality condition $\d_R^2E\succ 0$ is satisfied, this ensures well-posedness of the inverse in (\ref{invdPsidR}). Thus the equations  (\ref{b1_sol})--(\ref{b2_sol}), considered as assignment operators for $b_1$ and $b_2$, and (\ref{RR}) for $R$,  constitute a Newton-like iterative scheme for numerical computation of the state-space realization matrices of the optimal CQLQG controller.
These three assignment operators are alternated with updating the Gramians of the closed-loop system via the appropriate Lyapunov equations in (\ref{PQ}). The order of this alternation will influence the overall convergence rate of the scheme and is an important computational issue to be explored. Another issue to be taken into account is that the asymptotic stability of the closed-loop system matrix $\cA$ can be violated by the update of the matrices $b_1$, $b_2$, $R$ after which the next iteration becomes impossible. Therefore, being a local optimization algorithm,  the proposed scheme requires a ``stability recovery'' block.
A salient feature of such an algorithm (which is currently under development) is that it involves the inversion of  special self-adjoint operators on matrices which, in general, can only be carried out  via the vectorization of matrices  mentioned in Sections~\ref{sec:class} and \ref{sec:d2EdR2}.

\section{Conclusion}

We have obtained equations for the optimal controller in the Coherent Quantum LQG problem by direct Frechet differentiation of the LQG cost with respect to the pair of matrices which govern the Hamiltonian parameterization of physically realizable quantum controllers.
We have investigated spectral properties of special self-adjoint operators whose inverse plays an important role in the equations and can only be carried out by using  matrix vectorization.
We have established a partial decoupling of these equations with respect to the gain matrices of the optimal controller, which can be interpreted as a quantum analogue of the standard LQG control/filtering  separation principle.
Using this quasi-separation property, we have outlined a Newton-like iterative scheme for numerical computation of the quantum controller. The scheme involves a yet-to-be-explored  freedom of choosing the order in which to perform iterations with respect to the Hamiltonian and gain matrices of the controller to optimize the convergence rate.
The existence and uniqueness of solutions to the equations for the state-space realization matrices of the optimal CQLQG controller also remains an open problem and so does their further reducibility. This circle of  questions is a subject of ongoing research and will be tackled in subsequent publications.

\section*{Acknowledgement}

The work is supported by the Australian Research Council.



%
%
%
%
%
%


\appendix

\section{Invertibility of grade two operators}
\label{sec:inversion}

\begin{lemma}
\label{lem:cLsum2}
Let $r = 2$ in (\ref{cLsum}), and let both matrices $\alpha_1$ and $\beta_1$ be nonsingular. Then the operator $\cM:= \[[[\alpha_1, \beta_1 \mid \alpha_2, \beta_2\]]]$ is invertible if and only if the eigenvalues $\lambda_1, \ldots, \lambda_p$ of $\alpha_1^{-1} \alpha_2$ and the eigenvalues $\mu_1,\ldots, \mu_q$ of $\beta_2\beta_1^{-1}$ satisfy
\begin{equation}
\label{no-1}
    \lambda_j \mu_k \ne -1
    \quad
    {\rm for\ all}\
    j=1,\ldots, p,\
    k = 1,\ldots, q.
\end{equation}
\end{lemma}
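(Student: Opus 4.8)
The plan is to convert the invertibility of the operator $\cM$ into a nonsingularity criterion for a single matrix through the vectorization device of Section~\ref{sec:class}. Since $\vec(\alpha X\beta) = (\beta^{\rT}\ox\alpha)\vec(X)$ and $\vec$ is a linear bijection, the operator $\cM$ is invertible if and only if the matrix
\[
    \Xi = \beta_1^{\rT}\ox\alpha_1 + \beta_2^{\rT}\ox\alpha_2
\]
is nonsingular, so it suffices to analyse $\Xi$. The hypothesis that $\alpha_1$ and $\beta_1$ are nonsingular makes the Kronecker factor $\beta_1^{\rT}\ox\alpha_1$ invertible, which allows it to be factored out:
\[
    \Xi
    =
    (\beta_1^{\rT}\ox\alpha_1)
    \bigl(
        I_{pq} + (\beta_1^{\rT}\ox\alpha_1)^{-1}(\beta_2^{\rT}\ox\alpha_2)
    \bigr).
\]
Because the left factor is nonsingular, $\Xi$ is nonsingular precisely when the bracketed matrix is.

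Next I would simplify the bracketed matrix using $(\beta_1^{\rT}\ox\alpha_1)^{-1} = (\beta_1^{-1})^{\rT}\ox\alpha_1^{-1}$ together with the mixed-product rule $(A\ox B)(C\ox D)=(AC)\ox(BD)$, which gives
\[
    (\beta_1^{\rT}\ox\alpha_1)^{-1}(\beta_2^{\rT}\ox\alpha_2)
    =
    (\beta_2\beta_1^{-1})^{\rT}\ox(\alpha_1^{-1}\alpha_2),
\]
so that the bracket equals $I_{pq} + (\beta_2\beta_1^{-1})^{\rT}\ox(\alpha_1^{-1}\alpha_2)$. Since the spectrum of a Kronecker product consists of the pairwise products of the spectra of its factors, and transposition preserves eigenvalues, the matrix $(\beta_2\beta_1^{-1})^{\rT}\ox(\alpha_1^{-1}\alpha_2)$ has eigenvalues $\lambda_j\mu_k$, where $\lambda_1,\ldots,\lambda_p$ are the eigenvalues of $\alpha_1^{-1}\alpha_2$ and $\mu_1,\ldots,\mu_q$ those of $\beta_2\beta_1^{-1}$. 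Hence the bracketed matrix has eigenvalues $1+\lambda_j\mu_k$, and it is nonsingular if and only if none of them vanishes, that is, if and only if (\ref{no-1}) holds. This establishes the desired equivalence.

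The calculation is essentially routine; the only step demanding care is the bookkeeping of transposes in the mixed-product reduction. In particular, one must verify that it is the spectrum of $\beta_2\beta_1^{-1}$ (rather than $\beta_1^{-1}\beta_2$) that enters, using $(\beta_1^{-1})^{\rT}\beta_2^{\rT} = (\beta_2\beta_1^{-1})^{\rT}$, and that transposition leaves this spectrum unchanged so that the numbers $\mu_k$ in (\ref{no-1}) are correctly identified.
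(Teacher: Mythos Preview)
Your argument is correct and is essentially the same as the paper's: the paper factors the operator directly as $\cM=\cM_1\cM_2$ with $\cM_1=\[[[\alpha_1,\beta_1\]]]$ and $\cM_2=\[[[I,I\mid\alpha_1^{-1}\alpha_2,\beta_2\beta_1^{-1}\]]]$, then invokes the pairwise-product spectrum property of grade-one operators from Section~\ref{sec:class}, whereas you carry out the identical factorization at the level of the vectorized matrix $\Xi$ via Kronecker products. The two presentations are interchangeable, since the spectrum statement in Section~\ref{sec:class} is precisely the Kronecker-product eigenvalue fact you use.
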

\begin{proof}
If $r=2$, the operator (\ref{cLsum}) can be represented as
$
    \cM
    :=
    \[[[
        \alpha_1, \beta_1
        \mid
        \alpha_2, \beta_2
    \]]]
    =
    \cM_1
    \cM_2
$,
where
$
    \cM_1
    :=
    \[[[
        \alpha_1,
        \beta_1
    \]]]
$ and
$
    \cM_2
    :=
    \[[[
        I, I
        \mid
        \alpha_1^{-1}\alpha_2,
        \beta_2\beta_1^{-1}
    \]]]
$.
The operator $\cM_1$ is invertible in view of the nonsingularity of the matrices $\alpha_1$ and $\beta_1$, with $\cM_1^{-1}=\[[[\alpha_1^{-1},\beta_1^{-1}\]]]$. Hence, the invertibility of $\cM$ is equivalent to that of $\cM_2$. In turn, the operator  $\cM_2$ is invertible if and only if its spectrum $\{1+\lambda_j\mu_k:\ 1\< j\< p,\ 1\< k \< q\}$ does not contain $0$, which is equivalent to  (\ref{no-1}).
\end{proof}

By Lemma \ref{lem:cLsum2},  the nonsingularity of the matrix $
    \sum_{k=1}^{2}
    \beta_k^{\rT}\ox \alpha_k
$
of order $pq$ reduces to a joint property of individual spectra of two matrices of orders $p$ and $q$. This reduction does not hold for $r>2$.

\section{Perturbation of inverse Lyapunov operators}
\label{sec:ILO}

We associate an {\it
inverse Lyapunov operator} $\cL_A$ with a Hurwitz matrix $A \in \mR^{n\x n}$,   so that $\cL_A$  maps a
matrix $M \in \mR^{n\x n}$ to the unique solution  $N$ of the algebraic Lyapunov
equation $A N + NA^{\rT} + M = 0$:
\begin{equation}
\label{ILO}
    \cL_A(M)
    :=
    \int_{0}^{+\infty}
    \re^{At} M \re^{A^{\rT}t}
    \rd t.
\end{equation}
 Its adjoint is
$\cL_A^{\dagger} = \cL_{A^{\rT}}$. Since
$\cL_A$ commutes with the transpose, that is, $\cL_A(M^{\rT})
= (\cL_A(M))^{\rT}$, then it also commutes with a {\it symmetrizer}
$\cS$ defined by
\begin{equation}
\label{cS}
    \cS(M)
    :=
    (M+M^{\rT})/2.
\end{equation}
The operator $\cS: \mR^{n\times n} \to \mS_n$ is the orthogonal projection onto the subspace of real symmetric matrices of order $n$.

\begin{lemma}
\label{lem:dPQ}
The Frechet derivatives of the controllability and observability Gramians $P$ and $Q$ of an asymptotically  stable system $(A,B,C)$ with respect to the matrix
$
    \Gamma
    :=
    \begin{bmatrix}
        A & B\\
        C & 0
    \end{bmatrix}
$
are expressed in terms of (\ref{ILO}) and (\ref{cS}) as
\begin{align}
\label{dPdQ1}
    \d_{\Gamma}P
  &=
    2\cL_A\cS
    \[[[
            \begin{bmatrix}
                I & 0
            \end{bmatrix},
            \begin{bmatrix}
                P \\
                B^{\rT}
            \end{bmatrix}
    \]]],\\
\label{dPdQ2}
    \d_{\Gamma} Q
 &=
    2\cL_{A^{\rT}}\cS
    \[[[
            \begin{bmatrix}
                Q & C^{\rT}
            \end{bmatrix},
            \begin{bmatrix}
                I \\
                0
            \end{bmatrix}
    \]]].
\end{align}
\end{lemma}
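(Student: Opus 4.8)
The plan is to differentiate the two algebraic Lyapunov equations that define the Gramians and then invert the resulting homogeneous Lyapunov operator by means of $\cL_A$ from (\ref{ILO}). I would treat the blocks $A$, $B$, $C$ of $\Gamma$ as independent matrix variables and use that $P$ depends only on $(A,B)$ while $Q$ depends only on $(A,C)$.

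First I would take the first variation of the controllability Lyapunov equation $AP + PA^{\rT} + BB^{\rT} = 0$, obtaining $A(\delta P) + (\delta P)A^{\rT} + (\delta A)P + P(\delta A)^{\rT} + (\delta B)B^{\rT} + B(\delta B)^{\rT} = 0$. Since $A$ is Hurwitz, this is itself an algebraic Lyapunov equation in $\delta P$, so applying $\cL_A$ gives $\delta P = \cL_A[(\delta A)P + P(\delta A)^{\rT} + (\delta B)B^{\rT} + B(\delta B)^{\rT}]$. The next observation is that the argument of $\cL_A$ is symmetric, being a sum of terms of the form $M + M^{\rT}$, and therefore equals $2\cS[(\delta A)P + (\delta B)B^{\rT}]$ by the definition (\ref{cS}) of the symmetrizer.

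It then remains to recognize $(\delta A)P + (\delta B)B^{\rT}$ as the image of $\delta\Gamma$ under a grade one operator. Writing $\delta\Gamma = \begin{bmatrix} \delta A & \delta B \\ \delta C & 0 \end{bmatrix}$ and computing $\begin{bmatrix} I & 0 \end{bmatrix}(\delta\Gamma)\begin{bmatrix} P \\ B^{\rT} \end{bmatrix} = (\delta A)P + (\delta B)B^{\rT}$, one identifies this with $\[[[\begin{bmatrix} I & 0 \end{bmatrix}, \begin{bmatrix} P \\ B^{\rT} \end{bmatrix}\]]](\delta\Gamma)$ in the notation (\ref{cL}). Composing the three operators $\cL_A$, $\cS$ and this grade one operator yields (\ref{dPdQ1}). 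Formula (\ref{dPdQ2}) follows by the same route from the observability equation $A^{\rT}Q + QA + C^{\rT}C = 0$; the only change is that the homogeneous operator now acts through $A^{\rT}$, so its inverse is $\cL_{A^{\rT}}$, and the symmetrized term is $2\cS[Q(\delta A) + C^{\rT}(\delta C)] = 2\cS\,\[[[\begin{bmatrix} Q & C^{\rT} \end{bmatrix}, \begin{bmatrix} I \\ 0 \end{bmatrix}\]]](\delta\Gamma)$.

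The computation is essentially routine, so there is no deep obstacle; the points that require care are the bookkeeping of which block of $\delta\Gamma$ multiplies which Gramian (and hence the precise shapes $\begin{bmatrix} I & 0 \end{bmatrix}$, $\begin{bmatrix} P \\ B^{\rT} \end{bmatrix}$ of the outer factors) and the appearance of $A^{\rT}$ rather than $A$ in the second case. I would also invoke the fact, noted after (\ref{ILO}), that $\cL_A$ preserves symmetry and hence commutes with $\cS$, which confirms that the symmetric right-hand side is mapped to a symmetric $\delta P$ (respectively $\delta Q$), as the symmetry of the Gramians demands.
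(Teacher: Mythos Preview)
Your proposal is correct and follows essentially the same approach as the paper: take the first variation of each Lyapunov equation, group the inhomogeneous terms as $2\cS$ applied to a block product, and invert the resulting Lyapunov equation in $\delta P$ (resp.\ $\delta Q$) via $\cL_A$ (resp.\ $\cL_{A^{\rT}}$). The paper's proof is the same computation, presented in the order ``rewrite inhomogeneous term via $\cS$, then recognize a Lyapunov equation,'' and it likewise obtains (\ref{dPdQ2}) by duality.
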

\begin{proof}
The Frechet differentiability of $P$ and $Q$ is ensured by the assumption that $A$ is Hurwitz. The first variation of the algebraic Lyapunov equation $AP+PA^{\rT} + BB^{\rT} = 0$ yields
\begin{align*}
    0
&=
    (\delta A)P + A\delta P
    +
    (\delta P) A^{\rT}
     + P \delta A^{\rT}
    +
    (\delta B) B^{\rT} + B\delta B^{\rT}\\
&=
    A\delta P
    +
    (\delta P) A^{\rT}
    +
    2\cS
    \left(
            \begin{bmatrix}
                \delta A & \delta B
            \end{bmatrix}
            \begin{bmatrix}
                P \\
                B^{\rT}
            \end{bmatrix}
    \right).
\end{align*}
This is an algebraic Lyapunov equation with respect to  $\delta P$ with the same matrix $A$,
which proves (\ref{dPdQ1}) in view of the identity $\begin{bmatrix}A & B\end{bmatrix}=  \begin{bmatrix}I & 0\end{bmatrix} \Gamma$. The relation (\ref{dPdQ2}) is  obtained by a similar reasoning from the first variation of the Lyapunov equation for the observability Gramian $Q$, or  by using the duality between $P$ and $Q$.
\end{proof}

\section{Second Frechet derivative of the LQG cost}
\label{sec:d2EdA2}

\begin{lemma}
\label{lem:dEdGamma2}
The second Frechet derivative of the squared $\cH_2$-norm $E :=\|(A,B,C)\|_2^2$ of an asymptotically stable system with respect to the matrix $
    \Gamma
    :=
    \begin{bmatrix}
        A & B\\
        C & 0
    \end{bmatrix}
$  is computed as
\begin{align}
\nonumber
    \d_{\Gamma}^2  E
    =&
    4
    \[[[
        \begin{bmatrix}
            I\\
            0
        \end{bmatrix},
        \begin{bmatrix}
            P & B
        \end{bmatrix}
    \]]]
    \cL_{A^{\rT}}
    \cS
    \[[[
        \begin{bmatrix}
            Q & C^{\rT}
        \end{bmatrix},
        \begin{bmatrix}
            I\\
            0
        \end{bmatrix}
    \]]]\\
\nonumber
    &+4
    \[[[
        \begin{bmatrix}
            Q\\
            C
        \end{bmatrix},
        \begin{bmatrix}
            I & 0
        \end{bmatrix}
    \]]]
    \cL_A
    \cS
    \[[[
        \begin{bmatrix}
            I & 0
        \end{bmatrix},
        \begin{bmatrix}
            P\\
            B^{\rT}
        \end{bmatrix}
    \]]]\\
\label{dEdGamma2}
    &+
    2
    \[[[
        \begin{bmatrix}
          Q & 0 \\
          0 & I \\
        \end{bmatrix},
        \begin{bmatrix}
          0 & 0 \\
          0 & I \\
        \end{bmatrix}
        \mid
        \begin{bmatrix}
          0 & 0 \\
          0 & I \\
        \end{bmatrix},
        \begin{bmatrix}
          P & 0 \\
          0 & I \\
        \end{bmatrix}
    \]]].
\end{align}
Here, $\cL_A$ and $\cS$ are the inverse Lyapunov operator  and symmetrizer from (\ref{ILO}), (\ref{cS}), and $P$, $Q$ are  the controllability and observability Gramians of the system.
\end{lemma}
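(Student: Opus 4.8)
The plan is to obtain $\d_{\Gamma}^2 E$ as the first variation of the already-known first derivative. Applying Lemma~\ref{lem:dEdGamma} to the triple $(A,B,C)$ (its proof is generic and does not use the closed-loop structure) gives $\d_{\Gamma} E = 2\begin{bmatrix} QP & QB\\ CP & 0 \end{bmatrix}$, with $P$, $Q$ the controllability and observability Gramians. First I would take the first variation of this matrix in an arbitrary admissible direction $X = \begin{bmatrix} \delta A & \delta B\\ \delta C & 0 \end{bmatrix} \in \bGamma$, applying the product rule blockwise and, crucially, remembering that $P$ and $Q$ are themselves functions of $\Gamma$ ($P$ of $A,B$ and $Q$ of $A,C$), so that $\delta P = \d_{\Gamma} P(X)$ and $\delta Q = \d_{\Gamma} Q(X)$ contribute.

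The resulting variation of $\d_{\Gamma} E$ splits into three groups. The terms carrying $\delta Q$ assemble into $2\begin{bmatrix} (\delta Q)P & (\delta Q)B\\ 0 & 0 \end{bmatrix} = 2\begin{bmatrix} I\\ 0 \end{bmatrix}(\delta Q)\begin{bmatrix} P & B \end{bmatrix}$; those carrying $\delta P$ assemble into $2\begin{bmatrix} Q(\delta P) & 0\\ C(\delta P) & 0 \end{bmatrix} = 2\begin{bmatrix} Q\\ C \end{bmatrix}(\delta P)\begin{bmatrix} I & 0 \end{bmatrix}$; and the direct variations of the explicit factors $B$ and $C$ leave $2\begin{bmatrix} 0 & Q\,\delta B\\ (\delta C)P & 0 \end{bmatrix}$. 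Into the first two groups I would substitute $\delta Q = 2\cL_{A^{\rT}}\cS\[[[ \begin{bmatrix} Q & C^{\rT} \end{bmatrix}, \begin{bmatrix} I\\ 0 \end{bmatrix} \]]](X)$ and $\delta P = 2\cL_{A}\cS\[[[ \begin{bmatrix} I & 0 \end{bmatrix}, \begin{bmatrix} P\\ B^{\rT} \end{bmatrix} \]]](X)$ from Lemma~\ref{lem:dPQ}.

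It then remains to recast each group in the operator notation of Section~\ref{sec:class}. Reading the outer sandwiching $\begin{bmatrix} I\\ 0 \end{bmatrix}(\cdot)\begin{bmatrix} P & B \end{bmatrix}$ and $\begin{bmatrix} Q\\ C \end{bmatrix}(\cdot)\begin{bmatrix} I & 0 \end{bmatrix}$ as the grade one operators $\[[[ \begin{bmatrix} I\\ 0 \end{bmatrix}, \begin{bmatrix} P & B \end{bmatrix} \]]]$ and $\[[[ \begin{bmatrix} Q\\ C \end{bmatrix}, \begin{bmatrix} I & 0 \end{bmatrix} \]]]$ and composing with the inner operators, the two factors of $2$ combine to give exactly the first two terms of (\ref{dEdGamma2}). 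For the direct group I would check that left multiplication by $\begin{bmatrix} Q & 0\\ 0 & I \end{bmatrix}$ and right multiplication by $\begin{bmatrix} 0 & 0\\ 0 & I \end{bmatrix}$ selects the block $Q\,\delta B$, whereas $\begin{bmatrix} 0 & 0\\ 0 & I \end{bmatrix}(\cdot)\begin{bmatrix} P & 0\\ 0 & I \end{bmatrix}$ selects $(\delta C)P$, so that their sum, times $2$, reproduces the grade two operator on the third line of (\ref{dEdGamma2}). As a consistency check, $\d_{\Gamma}^2 E$ is self-adjoint: since $\cL_{A}^{\dagger}=\cL_{A^{\rT}}$, $\cS^{\dagger}=\cS$ and $\[[[\alpha,\beta\]]]^{\dagger}=\[[[\alpha^{\rT},\beta^{\rT}\]]]$, the first two terms are mutual adjoints, and the third is a grade two operator all of whose matrix pairs are symmetric because $P$ and $Q$ are.

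The heart of the argument is thus routine but delicate bookkeeping: each block-matrix product must be rewritten as a sandwiching by fixed rectangular factors of precisely matching dimensions, so that the inner slot can receive $\delta P$ or $\delta Q$, and this must then be composed with the grade one operators of Lemma~\ref{lem:dPQ} in the correct order, tracking the intermediate $n\times n$ arguments through $\cS$ and $\cL$. I expect the only genuinely nonobvious step to be recognising that the two seemingly unrelated direct insertions $Q\,\delta B$ and $(\delta C)P$ combine into a single self-adjoint grade two operator; this forces the particular choice of the four block-diagonal matrices whose one-sided supports isolate exactly the $(1,2)$ and $(2,1)$ blocks.
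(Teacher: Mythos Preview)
Your proposal is correct and follows essentially the same route as the paper: you vary the first derivative $\d_{\Gamma}E = 2\begin{bmatrix} QP & QB\\ CP & 0\end{bmatrix}$, split the result into the $\delta Q$-, $\delta P$-, and direct-variation groups, substitute the Gramian derivatives from Lemma~\ref{lem:dPQ}, and then recognise the direct block $\begin{bmatrix} 0 & Q\,\delta B\\ (\delta C)P & 0\end{bmatrix}$ as the grade two operator on the third line of (\ref{dEdGamma2}). Your extra self-adjointness check is a nice sanity addendum not present in the paper's proof.
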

\begin{proof}
Lemma~\ref{lem:dEdGamma} implies that the first variation of the Frechet derivative $\d_{\Gamma}E$ is computed as
\begin{align*}
    \delta \d_{\Gamma}E/2
    =&
    \delta
    \begin{bmatrix}
        QP & QB\\
        CP & 0
    \end{bmatrix}\\
    =&
    \begin{bmatrix}
        I\\
        0
    \end{bmatrix}
    \delta Q
    \begin{bmatrix}
        P & B
    \end{bmatrix}
    +
    \begin{bmatrix}
        Q\\
        C
    \end{bmatrix}
    \delta P
    \begin{bmatrix}
        I & 0
    \end{bmatrix}
    +
    \begin{bmatrix}
    0 & Q \delta B\\
    (\delta C) P & 0
    \end{bmatrix}.
\end{align*}
Hence, (\ref{dEdGamma2}) is obtained
by using the Frechet derivatives of the Gramians from Lemma~\ref{lem:dPQ} of \ref{sec:ILO} and the identity
$$
    \begin{bmatrix}
    0 & Q \delta B\\
    (\delta C) P & 0
    \end{bmatrix}
    =
    \begin{bmatrix}
      Q & 0 \\
      0 & I
    \end{bmatrix}
    \delta\Gamma
    \begin{bmatrix}
      0 & 0 \\
      0 & I
    \end{bmatrix}
    +
    \begin{bmatrix}
      0 & 0 \\
      0 & I
    \end{bmatrix}
    \delta\Gamma
    \begin{bmatrix}
      P & 0 \\
      0 & I
    \end{bmatrix}.
$$
\end{proof}

\begin{lemma}
\label{lem:dEdA2}
The second Frechet derivative of the squared $\cH_2$-norm $ E :=\|(A,B,C)\|_2^2$ of an asymptotically stable system with respect to $A$ is
\begin{equation}
\label{dEdA2}
    \d_A^2  E
    =
    4\cR,
    \qquad
    \cR
    :=
    \cQ \cL_A \cS \cP + \cP \cL_{A^{\rT}} \cS \cQ.
\end{equation}
Here,
$    \cQ:= \[[[Q,I\]]]$ and
$
    \cP:= \[[[I,P\]]]
$
are grade one self-adjoint operators (see Section~\ref{sec:class}) of the left and right multiplication by the observability and controllability Gramians  of the system.
\end{lemma}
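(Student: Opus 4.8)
The plan is to obtain $\d_A^2 E$ by differentiating the first-order identity $\d_A E = 2QP$ once more with respect to $A$, holding the input and output matrices $B$, $C$ fixed. The starting point is Lemma~\ref{lem:dEdGamma}, whose first-variation argument applied to the plain stable triple $(A,B,C)$ gives $\d_\Gamma E = 2\left[\begin{smallmatrix}QP & QB\\ CP & 0\end{smallmatrix}\right]$ and, in particular, $\d_A E = 2QP$. Since $A$ enters $E$ only through the two Gramians, the second derivative is the Frechet derivative of the matrix-valued map $A\mapsto 2QP$, whose first variation is $\delta(\d_A E) = 2((\delta Q)P + Q\,\delta P)$.

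Next I would compute the Gramian variations $\delta P$ and $\delta Q$ induced by a perturbation $\delta A$ with $\delta B = \delta C = 0$. These are precisely the specializations of Lemma~\ref{lem:dPQ}: evaluating the right-hand sides of (\ref{dPdQ1}) and (\ref{dPdQ2}) on the increment $\left[\begin{smallmatrix}\delta A & 0\\ 0 & 0\end{smallmatrix}\right]$ collapses the inner grade-one operators to multiplication by the Gramians, giving $\delta P = 2\cL_A\cS\cP(\delta A)$ and $\delta Q = 2\cL_{A^{\rT}}\cS\cQ(\delta A)$, where $\cP = \[[[I,P\]]]$ and $\cQ = \[[[Q,I\]]]$ are right multiplication by $P$ and left multiplication by $Q$. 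The same formulas follow directly from the first variation of the two algebraic Lyapunov equations, using $(\delta A)P + P(\delta A)^{\rT} = 2\cS\cP(\delta A)$ and $(\delta A)^{\rT}Q + Q(\delta A) = 2\cS\cQ(\delta A)$.

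Substituting these into $\delta(\d_A E)/2 = (\delta Q)P + Q\,\delta P$ and rewriting $(\delta Q)P = \cP(\delta Q)$ and $Q\,\delta P = \cQ(\delta P)$ yields $\delta(\d_A E) = 4(\cQ\cL_A\cS\cP + \cP\cL_{A^{\rT}}\cS\cQ)(\delta A)$, which is the asserted identity $\d_A^2 E = 4\cR$. As a built-in consistency check, $\cR$ is self-adjoint: since $\cL_A^{\dagger} = \cL_{A^{\rT}}$, the operators $\cS$, $\cP$, $\cQ$ are self-adjoint, and $\cL_{A^{\rT}}$ commutes with $\cS$ (Appendix~\ref{sec:ILO}), the adjoint of $\cQ\cL_A\cS\cP$ is $\cP\cL_{A^{\rT}}\cS\cQ$, so the two summands of $\cR$ are mutual adjoints and $\cR^{\dagger} = \cR$, as any second Frechet derivative must be.

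This computation is essentially routine, so I do not expect a genuine obstacle; the only care needed is in the bookkeeping of operator composition order and the left-versus-right multiplication conventions encoded in $\cP$ and $\cQ$, together with the correct collapse of the grade-one operators of Lemma~\ref{lem:dPQ} once $\delta B$ and $\delta C$ are set to zero. An alternative, equally short derivation reads the result off Lemma~\ref{lem:dEdGamma2} by restricting the increment to its $A$-block: the third ``cross'' term of (\ref{dEdGamma2}), which carries $Q\,\delta B$ and $(\delta C)P$, vanishes because $\delta B = \delta C = 0$, and extracting the $(A,A)$ block of the two surviving terms reproduces $4\cR$.
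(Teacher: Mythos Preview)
Your proposal is correct and follows essentially the same route as the paper: differentiate $\d_A E = 2QP$ using the Gramian variations from Lemma~\ref{lem:dPQ} specialized to $\delta B=\delta C=0$, and the paper likewise remarks that the result can alternatively be read off Lemma~\ref{lem:dEdGamma2}. Your added self-adjointness check is a nice extra but not needed for the argument.
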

\begin{proof}
In view of Lemma~\ref{lem:dPQ}, the first variation of $\d_A E  = 2QP$ with respect to $A$ is
\begin{align*}
    \delta \d_A  E
    &=
    2(Q\delta P + (\delta Q) P)\\
    &=
    4(
        Q\cL_A \cS((\delta A) P)
        +
        \cL_{A^{\rT}} \cS(Q(\delta A))P)
\end{align*}
which establishes (\ref{dEdA2}). Alternatively, (\ref{dEdA2}) can be obtained from (\ref{dEdGamma2}) of Lemma~\ref{lem:dEdGamma2}.
\end{proof}
Note that at least some eigenvalues of the self-adjoint operator $\cR$ in (\ref{dEdA2}) are positive, since
$
    \cR(A) = -QP
$ is the negative of the Hankelian,
and
$
    \bra
        A,
        \cR(A)
    \ket
    =
    -
    \bra
        A,
        QP
    \ket = \|(A,B,C)\|_2^2/2> 0
$.

\end{document}